\definecolor{DarkRed}{rgb}{0.8, 0.0, 0.0} 
\definecolor{DarkGreen}{rgb}{0.0, 0.8, 0.0}
\theoremstyle{plain}
    \newtheorem{theorem}{Theorem}
    \newtheorem{proposition}{Proposition} 
    \newtheorem{corollary}[theorem]{Corollary}
    \newtheorem{lemma}[theorem]{Lemma}
\theoremstyle{definition}
    \newtheorem{definition}{Definition}
    \newtheorem{example}{\bf Example}
    \newtheorem{remark}{Remark}
\newcommand{\Conv}{\operatorname{Conv}}
\newcommand{\D}{\mathsf{D}}
\renewcommand{\vec}[1]{\boldsymbol{#1}}
\newcommand{\HPhi}{H_{\Phi}}
\newcommand{\DPhi}{D_{\Phi}}
\newcommand{\IPhi}{I_{\Phi}}
\newcommand{\wIPhi}{\widetilde{I}_{\Phi}}
\newcommand{\E}{\mathbb{E}}
\newcommand{\R}{\mathbb{R}}
\newcommand{\fR}{\mathfrak{R}}
\newcommand\independent{\protect\mathpalette{\protect\independenT}{\perp}}
\def\independenT#1#2{\mathrel{\rlap{$#1#2$}\mkern2mu{#1#2}}}
\newcommand{\defeq}{\triangleq}
\def\rv{\@ifnextchar\bgroup{\rv@i}{\rv@ii}}
\def\rv@i#1{\mathbf{#1}} 
\def\rv@ii#1{\mathbf{#1}} 
\newcommand{\DSBS}{\operatorname{DSBS}}
\newcommand{\diag}{\operatorname{diag}}
\newcommand{\sa}{\mathbb{M}^{\operatorname{sa}}}
\def\mt{\@ifnextchar\bgroup{\mt@i}{\mt@ii}}
\def\mt@i#1{\boldsymbol{#1}}
\def\mt@ii#1{\boldsymbol{#1}}
\newcommand{\tr}{\operatorname{tr}}
\newcommand{\nortr}{\overline{\operatorname{tr}}}
\newcommand{\cU}{\mathcal{U}}
\newcommand{\cX}{\mathcal{X}}
\newcommand{\cY}{\mathcal{Y}}
\newcommand{\cZ}{\mathcal{Z}}
\newcommand{\set}[1]{\left\{ #1 \right\}}
\newcommand{\PSD}{\mathbb{M}^{+}}
\title{Ribbons from Independence Structure: Hypercontractivity, $\Phi$-Mutual Information, and Matrix $\Phi$-Entropy}
\author{Chenyu Wang, Amin Gohari\\
\emph{\small Department of Information Engineering}\\
\emph{\small The Chinese University of Hong Kong}
}
\begin{document}

\maketitle

\begin{abstract}
    We study the hypercontractivity ribbon and the $\Phi$-ribbon for joint distributions that obey a given independence structure, obtaining tight bounds in some basic regimes. For general independence structures, modeled as a hypergraph whose hyperedges specify mutually independent subcollections of random variables, we provide an explicit inner bound on the $\Phi$-ribbon described by a simple convex hull of incidence vectors. We also provide a new multipartite generalization version and a $\Phi$-mutual information analogue of the Zhang--Yeung inequality, which implies nontrivial points in the hypercontractivity ribbon and the $\Phi$-ribbon respectively. Finally, we propose the matrix $\Phi$-ribbon based on matrix $\Phi$-entropy and establish the tensorization and data processing properties, together with the calculation of an exact matrix SDPI constant for the doubly symmetric binary source. 
\end{abstract}

\section{Introduction}
Given a joint distribution $p_{X^n}$ on $n$ random variables $X^n=(X_1,\dots,X_n)$, the \emph{hypercontractivity (HC) ribbon} is defined as
\begin{equation*}
    \fR(X_1;\dots;X_n)
    \defeq
    \set{
        (\lambda_1,\dots,\lambda_n) \in [0,1]^n \colon
        \sum_{i=1}^n \lambda_i I(U;X_i) \le I(U;X^n), \quad \forall U
    }.
\end{equation*}
The following two facts are well understood. First, for any $p_{X^n}$, one always has
\begin{equation*}
    \set{ (\lambda_1,\dots,\lambda_n)\in[0,1]^n \colon \sum_{i=1}^n \lambda_i \le 1 }
    \subseteq \fR(X_1;\dots;X_n).
\end{equation*}
The above inclusion is tight when $X_1=\cdots=X_n$, in which case the HC ribbon is exactly the simplex $\sum_i\lambda_i\le 1$. At the other extreme, if $X_1,\dots,X_n$ are mutually independent, the HC ribbon equals the full cube $[0,1]^n$. These facts about the two extreme regimes (full dependence and full independence) continue to hold for the $\Phi$-ribbon \cite{beigi2018phi} (which generalizes the HC ribbon). But what about the intermediate regime between the two extremes in which the variables satisfy some partial independence constraint?

This paper is motivated by the following question: \emph{when the independence of $X_1,\dots,X_n$ is only partially constrained, which $(\lambda_1,\dots,\lambda_n)$ are guaranteed to lie in the ribbon for every distribution consistent with that pattern?} Our first contribution provides sharp answers in basic regimes and a general sufficient region for arbitrary independence structures. In particular, if $X_1,\dots,X_n$ are only assumed to be $k$-wise independent (any $k$ of the variables are mutually independent), we show that the guaranteed $\Phi$-ribbon becomes
\begin{align*}
    \Bigl\{(\lambda_1,\dots,\lambda_n)\in[0,1]^n:\ \sum_{i=1}^n \lambda_i\leq k\Bigr\},
\end{align*}
and this description is tight for the HC ribbon. We establish optimality by explicit counterexamples outside this set.

To treat general independence structures, we model the constraints using a hypergraph $H=(V,E)$ on $V=[n]\defeq\{1,\dots,n\}$, where each hyperedge $S\in E$ indicates that $\{X_i:i\in S\}$ are mutually independent. For this setting, we give an explicit inner bound on the ribbon described by the convex hull of the origin, the singleton incidence vectors, and the incidence vectors of hyperedges. 

Next, we provide a connection between the non-Shannon information inequalities and the HC ribbon. In particular, the celebrated Zhang--Yeung inequality \cite{zhang1998characterization} yields nontrivial points in the HC ribbon under a specific case where there are no direct independence structures. A multipartite extension of the Zhang--Yeung inequality was given in \cite{zhang2002new}. We show how such inequalities fit naturally into our framework, provide a streamlined derivation via our general bounds, and present another multipartite extension of the Zhang--Yeung inequality. Moreover, we develop a $\Phi$-mutual information analogue of the Zhang--Yeung inequality, which yields nontrivial $\Phi$-ribbon points. 

Finally, we introduce the \emph{matrix $\Phi$-ribbon} by replacing scalar $\Phi$-entropy with matrix $\Phi$-entropy in the sense of Chen and Tropp \cite{tropp2014subadditivity}. To the best of our knowledge, this matrix $\Phi$-ribbon notion has not appeared previously. We establish the tensorization and the data processing properties for the matrix $\Phi$-ribbon and we compute the exact strong data processing inequality (SDPI) constant for the doubly symmetric binary source (DSBS). Moreover, some of the results from the previous sections can be extended to the matrix $\Phi$-ribbon.

\section{Preliminaries}

\subsection{$\Phi$-ribbon}
The $\Phi$-divergence (usually called $f$-divergence in the literature) was originally proposed by Ali and Silvey \cite{ali1966general} as well as Csiszár \cite{csiszar2011information,csiszar1967information}:
\begin{align*}
    \DPhi(q_X\|p_X)
    &\defeq \sum_{x} p(x) \Phi\left(\frac{q(x)}{p(x)}\right)-\Phi(1).
\end{align*}
Similarly, the $\Phi$-mutual information  can be defined as 
\begin{align*}
    \IPhi(X;Y) \defeq \DPhi(p_{XY}\|p_Xp_Y)=\sum_{x,y} p(x)p(y) \Phi\left(\frac{p(x,y)}{p(x)p(y)}\right) -\Phi(1).
\end{align*}
The concept of $\Phi$-mutual information is related to $\Phi$-entropy, defined as follows:
given a convex function $\Phi$ on some domain $\mathcal{D}$, a function $f:\mathcal{X}\rightarrow \mathcal{D}$, and a random variable $X$ on $\mathcal{X}$,  the $\Phi$-entropy of $f$ is defined to be (refer to \cite{Boucheron2013concentration,chafai2004entropies,chafai2006binomial})
\begin{align*}
    \HPhi(f) &= \E[\Phi(f)] - \Phi(\E[f]).
\end{align*}
Given some arbitrary $p(x,y)$, define $f_x(y)=\frac{p(x,y)}{p(x)p(y)}$. Then  $\IPhi(X;Y)=\sum_{x} p(x) \HPhi(f_x)$ where $\HPhi(f_x)$ is calculated using $p(y)$ for every $x$. Moreover, $\IPhi(X;Y)=\sum_x p(x) \DPhi(p_{Y|x}\|p_Y)$. Thus the $\Phi$-entropy, the $\Phi$-divergence and the $\Phi$-mutual information are related when $\E[f]=1$ and $f\geq 0$. 

Given a function $f_{XY}$ of two random variables $(X,Y)$, we define 
\begin{align}
    H_\Phi(f|Y)& =\E[\Phi(f)]-\mathbb{E}_Y[\Phi(\E[f|Y])]\nonumber\\
    &= \sum_{y}p(y)\big(\E[\Phi(f)|Y=y] - \Phi(\E [f|Y=y])\big).\label{conditionalPhiEnt}
\end{align}
Then we have the chain rule $\HPhi(f)=\HPhi(\E[f|Y])+\HPhi(f|Y)$.

\begin{definition} \label{def:class-F}
    We define $\mathscr{F}$ to be the class of smooth convex functions $\Phi$, whose domain is a convex subset of $\R$, that are not affine (not of the form $at+b$ for some constants $a$ and $b$) and satisfy one of the following equivalent conditions (see \cite[Exercise 14.2]{boucheron2003concentration}):
    \begin{enumerate}[(i)]
    \item $(s,t)\mapsto p\Phi(s)+(1-p)\Phi(t)-\Phi\!\bigl(ps+(1-p)t\bigr)$, for any $p\in[0,1]$, is jointly convex.
    \item $(s,t)\mapsto \Phi(s)-\Phi(t)-\Phi'(t)(s-t)$ is jointly convex.
    \item $(s,t)\mapsto \bigl(\Phi'(s)-\Phi'(t)\bigr)(s-t)$ is jointly convex.
    \item $(s,t)\mapsto \Phi''(s)t^{2}$ is jointly convex.
    \item $1/\Phi''$ is concave.
    \item $\Phi''''\,\Phi'' \ge 2\bigl(\Phi'''\bigr)^{2}$.
    \end{enumerate}
\end{definition}

The class $\mathscr{F}$ implies the subadditivity of the $\Phi$-entropy \cite{beigi2018phi}. Suppose $\Phi\in \mathscr{F}$. If $X_1,\dots,X_n$ are mutually independent, for any $f$, we have
\begin{align*}
    \HPhi(f)\leq \sum_{i=1}^n \HPhi\left(f|X_{\widehat{i}}\right)
\end{align*}
where $X_{\widehat{i}}=(X_1,\dots,X_{i-1},X_{i+1},\dots,X_n)$. 

For $\Phi\in \mathscr{F}$, the $\Phi$-ribbon \cite{beigi2018phi} of random variables $X_1,\dots,X_n$ is defined as the set of all $(\lambda_1,\dots,\lambda_n) \in [0,1]^n$ that 
\begin{align*}  
    \sum_{i=1}^n \lambda_i \HPhi(\E[f|X_i])  
    &\leq \HPhi(f),
\end{align*}
for any function $f$. When we restrict that $f\geq 0$ and $\E[f]=1$, the above region is equivalent to 
\begin{align*}
    \fR_{\Phi}(X_1;\dots;X_n)
    \defeq
    \bigg\{ (\lambda_1,\dots,\lambda_n) \in [0,1]^n:
    \sum_{i=1}^n \lambda_i \IPhi(U;X_i)  
    \leq \IPhi(U;X^n)\quad \forall U
    \bigg\}
\end{align*}
(see \cite[Theorem 14]{beigi2018phi}). In this paper, we will adopt the definition $\fR_{\Phi}(X_1;\dots;X_n)$ as the $\Phi$-ribbon since the use of the auxiliary $U$ aligns with the HC ribbon. We can recover the HC ribbon when $\Phi(t)=t\log t$. 

The following  results for the $\Phi$-ribbon are given in \cite[Example 10, Proposition 11, Theorem 14]{beigi2018phi}: 
\begin{proposition}
\label{prop:phi-ribbon-properties}
    For $\Phi\in \mathscr{F}$, we have 
    \begin{enumerate}[\rm (i)]
    \item $\{(\lambda_1,\dots,\lambda_n):\sum_{i=1}^n \lambda_i \leq 1\}$ is always a subset of the $\Phi$-ribbon for any $p_{X^n}$. Moreover, when $X_1=\dots=X_n$ the above set is exactly the $\Phi$-ribbon.
    \item If $X_1,\dots,X_n$ are mutually independent, the $\Phi$-ribbon is exactly $[0,1]^n$. 
\end{enumerate}
\end{proposition}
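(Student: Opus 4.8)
The plan is to obtain each part from a single standard ingredient: the data-processing inequality (DPI) for the $\Phi$-divergence in part~(i), and the subadditivity of $\HPhi$ for mutually independent random variables (which the preliminaries already supply, since $\Phi\in\mathscr{F}$) in part~(ii). Throughout I will use that $\fR_\Phi$ is \emph{downward closed}: if $\lambda\in\fR_\Phi$ and $0\le\mu_i\le\lambda_i$ for every $i$, then $\mu\in\fR_\Phi$, because each $\IPhi(U;X_i)\ge0$.

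For part~(i), the inclusion $\{\lambda\in[0,1]^n:\sum_i\lambda_i\le1\}\subseteq\fR_\Phi$ is a direct computation. Applying DPI for the $\Phi$-divergence to the coordinate marginalization $(U,X^n)\mapsto(U,X_i)$, which carries $p_{UX^n}\mapsto p_{UX_i}$ and $p_Up_{X^n}\mapsto p_Up_{X_i}$, gives $\IPhi(U;X_i)\le\IPhi(U;X^n)$, so for $\lambda_i\ge0$ with $\sum_i\lambda_i\le1$ one has
\[
    \sum_{i=1}^n\lambda_i\,\IPhi(U;X_i)\ \le\ \Bigl(\sum_{i=1}^n\lambda_i\Bigr)\IPhi(U;X^n)\ \le\ \IPhi(U;X^n)
\]
for every $U$, using $\IPhi(U;X^n)\ge0$ in the last step. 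For the reverse inclusion when $X_1=\dots=X_n$, write $X$ for the common variable; then $X$ and $X^n$ determine each other, so DPI in both directions gives $\IPhi(U;X_i)=\IPhi(U;X)=\IPhi(U;X^n)$ for every $U$, and the ribbon constraint reduces to $\bigl(\sum_i\lambda_i\bigr)\IPhi(U;X)\le\IPhi(U;X)$ for all $U$. Choosing $U=X$ (or any $U$ with $\IPhi(U;X)>0$, which exists as soon as $X$ is non-constant, since $\Phi$ is strictly convex on the interior of its domain) forces $\sum_i\lambda_i\le1$; the degenerate case of an a.s.\ constant $X$ is covered by part~(ii).

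For part~(ii), by downward closure and $\fR_\Phi\subseteq[0,1]^n$ it suffices to show $(1,\dots,1)\in\fR_\Phi$, i.e.\ $\sum_i\IPhi(U;X_i)\le\IPhi(U;X^n)$ for all $U$. Using the $\Phi$-entropy representation of mutual information from the preliminaries, $\IPhi(U;X^n)=\sum_u p(u)\HPhi(f_u)$ with $f_u=p_{X^n|u}/p_{X^n}$, and $\IPhi(U;X_i)=\sum_u p(u)\HPhi(\E[f_u|X_i])$ via Bayes' rule $\E[f_u|X_i]=p_{X_i|u}/p_{X_i}$ (here $p_{X^n}=\prod_i p_{X_i}$ because the $X_i$ are independent, and the $\Phi$-entropies are taken under $p_{X^n}$ and its marginals, consistently with the chain rule). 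Hence it is enough to prove, for every $f$,
\[
    \sum_{i=1}^n\HPhi\bigl(\E[f|X_i]\bigr)\ \le\ \HPhi(f),
\]
which I would establish by induction on $n$. For $n=1$ it is the identity $\E[f|X_1]=f$. For $n=2$ it is the $n=2$ case of the subadditivity stated above, namely $\HPhi(f)\le\HPhi(f|X_1)+\HPhi(f|X_2)$, rewritten via the chain rule $\HPhi(f|X_j)=\HPhi(f)-\HPhi(\E[f|X_j])$. For the inductive step, set $A=(X_1,\dots,X_{n-1})$ and $B=X_n$, which are independent; the $n=2$ case applied to $(A,B)$ gives $\HPhi(\E[f|A])+\HPhi(\E[f|X_n])\le\HPhi(f)$, while the inductive hypothesis applied to $\E[f|A]$ — a function of $X_1,\dots,X_{n-1}$ — together with the tower property $\E[\E[f|A]|X_i]=\E[f|X_i]$ for $i\le n-1$, gives $\sum_{i=1}^{n-1}\HPhi(\E[f|X_i])\le\HPhi(\E[f|A])$; adding the two finishes the induction. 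Summing the resulting bound over $u$ with weights $p(u)$ yields $\sum_i\IPhi(U;X_i)\le\IPhi(U;X^n)$.

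The routine checks — downward closure of $\fR_\Phi$, the Bayes and tower identities, nonnegativity of $\IPhi$ and of conditional $\HPhi$, and $\IPhi(X;X)>0$ for non-constant $X$ — are immediate. The only substantive input is the two-variable subadditivity of $\HPhi$, which is essentially condition~(i) of Definition~\ref{def:class-F} and is precisely where $\Phi\in\mathscr{F}$ is used, and which the preliminaries already provide; accordingly the \emph{main obstacle} is bookkeeping rather than analysis, namely passing correctly from the ``conditioning on complements'' form $\HPhi(f)\le\sum_i\HPhi(f|X_{\widehat{i}})$ stated above to the super-additive ``conditioning on singletons'' form $\sum_i\HPhi(\E[f|X_i])\le\HPhi(f)$ needed here, which for $n\ge3$ is a genuinely different inequality and requires the chain rule together with the blockwise induction.
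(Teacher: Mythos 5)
Your proposal is correct, but note that the paper itself offers no proof of this proposition: it is imported verbatim from \cite[Example 10, Proposition 11, Theorem 14]{beigi2018phi}, so there is no in-paper argument to compare against, and what you have written is a self-contained reconstruction of the cited material. Your route is the natural one and matches the standard proofs: part (i) is the data-processing inequality for $\DPhi$ applied to the marginalization $(U,X^n)\mapsto(U,X_i)$ plus nonnegativity, and the converse for $X_1=\dots=X_n$ follows by testing with a $U$ having $\IPhi(U;X)>0$; part (ii) reduces, via the identities $\IPhi(U;X^n)=\sum_u p(u)\HPhi(f_u)$ and (using independence for the Bayes step, as you correctly flag) $\IPhi(U;X_i)=\sum_u p(u)\HPhi(\E[f_u|X_i])$, to the singleton-form superadditivity $\sum_i\HPhi(\E[f|X_i])\le\HPhi(f)$, which you correctly observe is \emph{not} the complement-conditioning subadditivity stated in the preliminaries when $n\ge 3$, and which your chain-rule-plus-blockwise-induction argument derives cleanly from the two-variable case. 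Two minor caveats, both inherited from the cited statement rather than introduced by you: the strict-inclusion claim in (i) needs $\IPhi(X;X)>0$, which requires $0$ to lie in the domain of $\Phi$ and $\Phi$ to be strictly convex on the interior (the latter does follow from concavity of $1/\Phi''$, as you assert, since a finite concave function cannot blow up at an interior point), and the a.s.-constant case of $X$ genuinely makes the ``exactly the simplex'' claim degenerate, so deferring it to part (ii) is the right reading. In short: correct, essentially the same approach as the source the paper cites, with the added value of being self-contained from the paper's preliminaries.
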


\subsection{Matrix $\Phi$-entropy}
Tropp and Chen \cite{tropp2014subadditivity} extend the $\Phi$-entropy to the matrix-valued functions. Consider $\mt F: \cX\to \PSD$ which is a matrix-valued function. For the diagonal matrix $\mt S=\diag(s_1,\dots,s_d)$, define $\Phi(\mt S)\defeq \diag(\Phi(s_1),\dots,\Phi(s_d))$. The positive semidefinite matrix $A$ admits the diagonalization $\mt A =\mt U \mt S\mt U^{\ast}$. Then define $\Phi(\mt A)\defeq \mt U \Phi(\mt S) \mt U^{\ast}$. Let $\sa$ be the set of Hermitian matrices and $\PSD$ be the set of positive semi-definite matrices (PSD). A matrix valued function $\Phi$ is differentiable at $\mt A\in \sa$ if there exists a linear operator $\D\Phi(\mt A)(\cdot)$ such that for all $\mt H\in \sa$, 
    \begin{align*}
        \|\Phi(\mt A+\mt H)-\Phi(\mt A)-\D\Phi(\mt A)(\mt H)\|=o(\|\mt H\|).
    \end{align*}
The linear operator $\D\Phi(\mt A)(\cdot)$ is called the Frech\'et derivative of $\Phi$ at $\mt A$. Let $\D\Phi(A)(\cdot,\cdot)$ denote the second order Frech\'et derivative. 

In \cite{tropp2014subadditivity}, the matrix $\Phi$-entropy of the $d$-dimensional matrix function $\mt F: \cX\to \PSD$ is defined as 
\begin{align*}
    \HPhi(\mt{F})\defeq \nortr(\E[\Phi(\mt{F})]-\Phi(\E[\mt{F}])),
\end{align*}
where $\nortr\defeq \tr/d$ is the normalized trace for $d\times d$ matrices.

It is easy to verify that given $\mt F=\mt F(X)$, we have the chain rule for the $\Phi$-entropy, 
\begin{align} \label{matrix-Phi:chain-rule}
    \HPhi(\mt F)=\HPhi(\E[\mt F|X])+\HPhi(\mt F|X),
\end{align}
where 
\begin{align*}
    \HPhi(\mt F|X)\defeq \sum_{x\in \cX} p(x)\HPhi(\mt F|X=x).
\end{align*}
Since the trace function is convex, the matrix $\Phi$-entropy is non-negative. Suppose $\mt F=\mt F(X,Y,Z)$. By the chain rule \eqref{matrix-Phi:chain-rule}, we have
\begin{align*}
    \HPhi(\mt F|X)=\HPhi(\mt F|XY)+\HPhi(\E[\mt F|XY]|X).
\end{align*}
It implies that 
\begin{align*}
    \HPhi(\mt F|X)\geq \HPhi(\mt F|XY).
\end{align*}
Again by the chain rule \eqref{matrix-Phi:chain-rule}, it is equivalent to 
\begin{align}\label{matrix-Phi:cond-reduce}
    \HPhi(\E[\mt F|XY])\geq \HPhi(\E[\mt F|X]).
\end{align}
Since $\E[\mt F|XY]$ is a mapping of $(X,Y)$ and $\E[\mt F|X]$ is a mapping of $X$, the above inequality means that conditioning reduces the matrix $\Phi$-entropy, aligning with Shannon entropy.

\begin{definition} \label{def:class-FM}
    Define $\mathscr{F}_M$ to be the set of $\Phi:\R_{+}\rightarrow \R$ such that $\Phi$ is continuous and convex, has twice continuous derivatives on $\R_{++}$ that are not affine, and satisfies one of the following equivalent conditions (see \cite{cheng2016characterizations} Theorem 3.3): 
    \setlist{label=\rm (\roman*)}
    \begin{enumerate}
        \item Define $\Psi(t)=\Phi'(t)$. The Frech\'et derivative $\D\Psi$ is invertible and the mapping $A\mapsto (\D\Psi(A))^{-1}$ is operator concave. 
        \item The mapping $(\mt A,\mt B)\mapsto \nortr(t\Phi(\mt A)+(1-t)\Phi(\mt B)-\Phi(t\mt A+(1-t)\mt B))$ is jointly convex for all $0\leq t\leq 1$. \label{def-joint-convexity1}
        \item The matrix Breg\'eman divergence $(\mt A,\mt B)\mapsto \tr(\Phi(\mt A+\mt B)-\Phi(\mt A)-\D\Phi(\mt A)(\mt B))$ is jointly convex. 
        \item $(\mt A,\mt B)\mapsto \tr(\D\Phi(\mt A+\mt B)-\D\Phi(\mt A)(\mt B))$ is jointly convex.
        \item $(\mt A,\mt B)\mapsto \tr(\D^2\Phi(\mt A)(\mt B,\mt B))$ is jointly convex.
    \end{enumerate}
\end{definition}
As is remarked by Cheng and Hsieh \cite{cheng2016characterizations}, the above equivalent conditions can be seen as the non-commutative version of $\mathscr{F}$. As  shown in \cite{tropp2014subadditivity}, for $\Phi\in \mathscr{F}_M$, we have the subadditivity of the matrix $\Phi$-entropy: if $X_1,\dots,X_n$ are mutually independent, for any $\mt F$, 
\begin{align*}
    \HPhi(\mt F)\leq \sum_{i=1}^n \HPhi\left(\mt F|X_{\widehat{i}}\right).
\end{align*}

\begin{example}
    The following functions belong to the $\mathscr{F}_M$ class:
    \begin{enumerate}
        \item $t\log t$. In this case, consider a mapping $\mt F: \cX\to \PSD$ such that $\tr(F(x))=1$ for all $x\in\mathcal{X}$. Then, one can view $\mt F(x)$ as the state of a d-dimensional quantum system $U$, conditioned on $X=x$. So, $(X,U)$ is a cq state. Then, $\E[\mt{F}]$ is the average state of $U$ and the matrix $\Phi$-entropy is $$\HPhi(\mt{F})=\nortr(\E[\mt{F}\log\mt{F}]-\E[\mt{F}]\log(\E[\mt{F}]))=\frac{1}{d}(-H(U|X)+H(U))=\frac{1}{d}I(U;X).$$
        \item $t^p/p$ for $p\in [1,2]$.
    \end{enumerate}
\end{example}

\section{Ribbons under a partial independence structure}

\subsection{Special independence structures}

Suppose $X_1,\dots, X_n$ are $k$-wise independent. We show that the HC ribbon satisfies $\sum_{i=1}^n \lambda_i \leq k$; moreover, it is optimal in the sense of the following theorem. 
\begin{theorem} \label{thm:k-HC}
    Suppose $U,X_1,\dots,X_n$ are random variables. Then 
    \setlist[1]{label=\rm(\roman*)}
    \begin{enumerate}
        \item for any $U,X_1,\dots,X_n$ 
        \begin{align} \label{ineq:general-hc-k}
            \sum_{i=1}^n \lambda_i I(U;X_i)-I(U;X^n)\leq 
            \sum_{S\subseteq [n]:|S|=k} 
            \Bigg(
            \sum_{i\in S} H(X_i)
            -H(X_S)
            \Bigg),
        \end{align}
        for $\lambda_i\in [0,1]$ and $\sum_{i=1}^n \lambda_i \leq k$.
    \end{enumerate}
    Suppose for each subset $S\subseteq [n]$ that $|S|=k$, $(X_i)_{i\in S}$ are mutually independent. Then
    \begin{enumerate}[resume]
        \item if $\sum_{i=1}^n \lambda_i\leq k$, $(\lambda_1,\dots,\lambda_n)$ is always in the hypercontractivity ribbon for any $X_1,\dots,X_n$.
        \item if $\sum_{i=1}^n \lambda_i>k$, there are counterexamples that $(\lambda_1,\dots,\lambda_n)$ is not in the hypercontractivity ribbon.  
    \end{enumerate}
\end{theorem}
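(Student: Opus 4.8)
The plan is to prove (i) directly; (ii) is then an immediate specialization, and (iii) is an explicit family of counterexamples.

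For (i), fix $U$ and $X^n$. As a function of $\lambda=(\lambda_1,\dots,\lambda_n)$, the left-hand side of \eqref{ineq:general-hc-k} is affine, while the right-hand side does not depend on $\lambda$; hence it suffices to verify the inequality at the extreme points of the polytope $\cP=\{\lambda\in[0,1]^n:\sum_i\lambda_i\le k\}$. A short perturbation argument shows every extreme point of $\cP$ is a $0/1$ vector with at most $k$ ones: a point with two coordinates in $(0,1)$ can be moved along $\pm(e_i-e_j)$ within $\cP$; a point with a single coordinate in $(0,1)$ can be moved along $\pm e_i$ when $\sum_i\lambda_i<k$, while if $\sum_i\lambda_i=k$ the remaining integral coordinates would have to sum to a non-integer. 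So it is enough to bound the left-hand side at $\lambda=\mathbbm{1}_T$ with $|T|\le k$, and since $I(U;X_i)\ge 0$ we may enlarge $T$ to any $T'\supseteq T$ with $|T'|=k$ without decreasing it. Thus the whole claim reduces to
\[
    \sum_{i\in T'}I(U;X_i)-I(U;X^n)\leq\sum_{i\in T'}H(X_i)-H(X_{T'}),\qquad |T'|=k.
\]

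To see this, expand each mutual information into a difference of entropies: $\sum_{i\in T'}I(U;X_i)-I(U;X_{T'})$ equals the multi-information $\sum_{i\in T'}H(X_i)-H(X_{T'})$ minus the conditional multi-information $\sum_{i\in T'}H(X_i|U)-H(X_{T'}|U)$, and the latter is non-negative (a sum of conditional mutual informations by the chain rule). Combined with the data-processing inequality $I(U;X_{T'})\le I(U;X^n)$ — valid because $X_{T'}$ is a function of $X^n$ — this gives the displayed bound. Finally, each summand $\sum_{i\in S}H(X_i)-H(X_S)$ in \eqref{ineq:general-hc-k} is itself a non-negative multi-information, so the one term indexed by $S=T'$ is at most the full sum; this proves (i). For (ii): $k$-wise independence forces $\sum_{i\in S}H(X_i)=H(X_S)$ for every $|S|=k$, so the right side of \eqref{ineq:general-hc-k} vanishes and (i) becomes precisely the defining inequality of $\fR(X_1;\dots;X_n)$, for all $\lambda\in[0,1]^n$ with $\sum_i\lambda_i\le k$.

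For (iii), I would use a Reed--Solomon (MDS) source. Fix a prime power $q\ge n$, pick distinct $\alpha_1,\dots,\alpha_n\in\mathbb{F}_q$, let $\pi$ be a polynomial over $\mathbb{F}_q$ of degree less than $k$ with coefficients chosen uniformly and independently, and set $X_i=\pi(\alpha_i)$. For any $k$ indices, $\pi\mapsto(\pi(\alpha_{i_1}),\dots,\pi(\alpha_{i_k}))$ is a linear bijection $\mathbb{F}_q^k\to\mathbb{F}_q^k$ (invertible Vandermonde), so those $k$ variables are jointly uniform on $\mathbb{F}_q^k$, hence mutually independent; the hypothesis of (ii)--(iii) holds. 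Moreover each $X_i$ is uniform on $\mathbb{F}_q$, and $X^n$ is a deterministic function of $\pi$ that is recovered from any $k$ of the $X_i$, so $H(X^n)=H(\pi)=k\log q$. Taking $U=X^n$ gives $I(U;X_i)=H(X_i)=\log q$ for each $i$ and $I(U;X^n)=k\log q$, whence $\sum_i\lambda_iI(U;X_i)-I(U;X^n)=\bigl(\sum_i\lambda_i-k\bigr)\log q>0$ whenever $\sum_i\lambda_i>k$; so $(\lambda_1,\dots,\lambda_n)$ is not in the hypercontractivity ribbon.

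The step that really needs care is the reduction to extreme points in (i): the bound genuinely fails on the cone $\{\lambda\ge0,\ \sum_i\lambda_i\le k\}$ — e.g.\ at $\lambda=(k,0,\dots,0)$ with $U=X_1$ and $X_1,\dots,X_n$ i.i.d.\ non-constant, where the left side is $(k-1)H(X_1)>0$ and the right side is $0$ — so the constraint $\lambda_i\le 1$ must be used, and one must also check that padding $T$ up to size $k$ can only help. The entropy identity, the data-processing step, and the MDS construction in (iii) are all routine.
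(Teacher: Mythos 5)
Your proof is correct and follows essentially the same route as the paper: reduce to the $0/1$ extreme points of the $\lambda$-polytope, bound each size-$k$ term via subadditivity of entropy conditioned on $U$ together with $I(U;X_S)\le I(U;X^n)$, and use nonnegativity of the multi-informations to pass to the full sum. Your counterexample for (iii) is also the paper's (a degree-$<k$ random polynomial evaluated at distinct points, with $U=X^n$ in place of the coefficient vector, which carries the same information).
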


\begin{proof}
    (i) It suffices to prove the case $\sum_{i=1}^n \lambda_i=k$. Note that
\[
\Lambda=\{(\lambda_1,\dots,\lambda_n)\in [0,1]^n : \sum_{i=1}^n \lambda_i = k\}
\]
is a polytope in $\mathbb{R}^n$ defined by the constraints $\lambda_i \geq 0$, $\lambda_i \leq 1$ for $i=1,2,\dots,n$, and $\sum_{i=1}^n \lambda_i = k$. The vertices of the polytope must lie on $n$ hyperplanes, meaning that they must satisfy $n$ constraints with equality. Among these, $n-1$ constraints must be selected from $\lambda_i \geq 0$ and $\lambda_i \leq 1$, so $n-1$ entries are integers (either $0$ or $1$). Since $\sum_{i=1}^n \lambda_i = k$ is an integer, all entries must be integers. Thus, the vertices of the polytope correspond to subsets of $[n]$ of size $k$. 

Let $S$ be a subset of $[n]$. Then, $H(X_S|U)\leq \sum_{i\in\mathcal{S}}H(X_i|U)$ where $X_S=(X_i)_{i\in S}$. Equivalently,
\begin{align*}
    \sum_{i\in S} I(U;X_i)
    -I(U;X_S)\leq \sum_{i\in S} H(X_i)-H(X_S)
\end{align*}
From here, we obtain
\begin{align}
    \sum_{i\in S} I(U;X_i)
    -I(U;X^n)\leq \sum_{i\in S} H(X_i)-H(X_S)\label{eqnfe}
\end{align}
As discussed above, we can write a given vector $(\lambda_i)_{i=1}^n\in\Lambda$ as a convex combination of the indicator functions of subsets $S$ of size $k$. Assume that the weight of the subset $S$ in the convex combination is $\omega_S\in[0,1]$. From \eqref{eqnfe}, we obtain
  \begin{align}
        \sum_{i=1}^n \lambda_i I(U;X_i)-I(U;X^n)&\leq 
            \sum_{S\subseteq [n]:|S|=k} \omega_S
            \Bigg(
            \sum_{i\in S} H(X_i)
            -H(X_S)
            \Bigg)\\
            &\leq 
            \sum_{S\subseteq [n]:|S|=k} 
            \Bigg(
            \sum_{i\in S} H(X_i)
            -H(X_S)
            \Bigg)
    \end{align}

    (ii) is a direct consequence of (i).

    (iii) Suppose $\sum_{i=1}^n \lambda_i=k+\epsilon$ for $\epsilon>0$. Let $p>n$ be prime. Let $A_0,\dots,A_{k-1}$ be mutually independent random variables uniformly distributed from $\{0,\dots,p-1\}$. Define $P(x)\defeq A_0+A_1 x+\dots+A_{k-1} x^{k-1}\mod p$ and let $X_i=P(i)$. Then any $k$ of $X_1,\dots,X_n$ are mutually independent. Let $U=A_0\dots A_{k-1}$. Then
    \begin{align*}
        \sum_{i=1}^n \lambda_i I(U;X_i)-I(U;X^n)
        &= \sum_{i=1}^n \lambda_i H(X_i) - H(A_0\dots A_{k-1})\\
        &= \log(p)\sum_{i=1}^n \lambda_i  - k\log(p)\\
        &= \epsilon \log(p)\\
        &>0.
    \end{align*}
    Thus $(\lambda_1,\dots,\lambda_n)$ is not in the hypercontractivity ribbon. 
\end{proof}

With the same polytope claim and combined with Proposition \ref{prop:phi-ribbon-properties} (ii), we can generalize the above results about the achievable region to the $\Phi$-ribbon:
\begin{theorem} \label{thm:k-ind}
    Suppose $\Phi\in \mathscr{F}$. Let  $X_1,\dots,X_n$ be $k$-wise independent. Then
    $\{(\lambda_1,\dots,\lambda_n)\in [0,1]^n:\sum_{i=1}^n \lambda_i\leq k\}\subseteq \fR_{\Phi}(X_1;\dots;X_n)$ for any $X_1,\dots,X_n$.  
\end{theorem}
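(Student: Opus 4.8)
The plan is to combine convexity of the $\Phi$-ribbon with a vertex description of the polytope $P_k\defeq\{\vec\lambda\in[0,1]^n:\sum_{i=1}^n\lambda_i\le k\}$ and Proposition \ref{prop:phi-ribbon-properties}(ii). First, $\fR_{\Phi}(X_1;\dots;X_n)$ is convex, being the intersection over all auxiliaries $U$ of the half-spaces $\{\vec\lambda\in[0,1]^n:\sum_i\lambda_i\IPhi(U;X_i)\le\IPhi(U;X^n)\}$, each cut out by a single linear inequality in $\vec\lambda$. Hence it suffices to place every \emph{vertex} of $P_k$ inside the ribbon. Reusing the polytope argument from the proof of Theorem \ref{thm:k-HC}(i) --- any vertex of $P_k$ satisfies $n$ of the constraints $\{\lambda_i\ge0,\ \lambda_i\le1,\ \sum_i\lambda_i\le k\}$ with equality; if $\sum_i\lambda_i=k$ is among them, then $n-1$ coordinates are already $0/1$ and integrality of $k$ forces the last, while if $\sum_i\lambda_i=k$ is slack then all $n$ tight constraints are box constraints, so the vertex is a $0/1$ vector with at most $k-1$ ones --- the vertices of $P_k$ are exactly the incidence vectors $\mathbf 1_S\in\{0,1\}^n$ of subsets $S\subseteq[n]$ with $|S|\le k$, and $P_k=\Conv\{\mathbf 1_S:|S|\le k\}$.

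Next, fix $S$ with $|S|\le k$ and show $\mathbf 1_S\in\fR_{\Phi}(X_1;\dots;X_n)$, i.e.\ $\sum_{i\in S}\IPhi(U;X_i)\le\IPhi(U;X^n)$ for every $U$. Since $X_1,\dots,X_n$ are $k$-wise independent and $|S|\le k$, the subcollection $(X_i)_{i\in S}$ is mutually independent, so Proposition \ref{prop:phi-ribbon-properties}(ii) gives $\fR_{\Phi}(X_i:i\in S)=[0,1]^{|S|}$; in particular $(1,\dots,1)$ lies in it, which is precisely $\sum_{i\in S}\IPhi(U;X_i)\le\IPhi(U;X_S)$ for all $U$. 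It remains to lift this to $X^n$: marginalizing $X^n$ to $X_S$ is a channel applied to the second argument, so $\IPhi(U;X_S)\le\IPhi(U;X^n)$ by the data-processing inequality for $\Phi$-divergence --- equivalently, writing $f$ for the relevant likelihood ratio and using the chain rule $\HPhi(f)=\HPhi(\E[f\mid X_S])+\HPhi(f\mid X_S)$ together with $\HPhi(f\mid X_S)\ge0$, exactly as in the derivation of \eqref{matrix-Phi:cond-reduce}. Chaining the two inequalities gives $\sum_{i\in S}\IPhi(U;X_i)\le\IPhi(U;X^n)$, so $\mathbf 1_S$ is in the ribbon; convexity then yields $P_k=\Conv\{\mathbf 1_S:|S|\le k\}\subseteq\fR_{\Phi}(X_1;\dots;X_n)$, as claimed.

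The polytope bookkeeping and the convexity of the ribbon are routine; the only step with genuine content is the reduction from the full tuple $X^n$ to the subtuple $X_S$. It is worth emphasizing that one cannot simply condition on $X_{S^c}$ and invoke subadditivity of the $\Phi$-entropy, because $k$-wise independence guarantees only the \emph{unconditional} product form of $p_{X_S}$; one must first pass to $\E[f\mid X_S]$ --- paying the ``conditioning reduces $\Phi$-entropy'' inequality --- so that the operative marginal is a product, and only then apply Proposition \ref{prop:phi-ribbon-properties}(ii). This is precisely the $\Phi$-entropic shadow of inequality \eqref{eqnfe} used in the proof of Theorem \ref{thm:k-HC}.
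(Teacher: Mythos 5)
Your proposal is correct and follows essentially the same route as the paper: identify the vertices of the polytope $\{\vec\lambda\in[0,1]^n:\sum_i\lambda_i\le k\}$ as $0/1$ incidence vectors, verify the ribbon inequality at each vertex via Proposition \ref{prop:phi-ribbon-properties}(ii) together with $\IPhi(U;X_S)\le\IPhi(U;X^n)$, and conclude by convexity of the ribbon. The only differences are cosmetic: you work with all subsets of size at most $k$ rather than reducing to the slice $\sum_i\lambda_i=k$, and you spell out the marginalization step that the paper treats as immediate.
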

\begin{proof}
    For any $S\subseteq [n]$ that $|S|=k$, $X_S=(X_i)_{i\in S}$ are mutually independent. By Proposition \ref{prop:phi-ribbon-properties} (ii), 
    \begin{align}
        \sum_{i\in S} \IPhi(U;X_i)
        \leq \IPhi(U;X_S) \leq \IPhi(U;X^n), \qquad\forall S\subseteq [n]: |S|=k.\label{eqncv2}
    \end{align}
As discussed in the proof of Theorem \ref{thm:k-HC}, the vertices of the set    \[
    \Lambda=\{(\lambda_1,\dots,\lambda_n)\in [0,1]^n : \sum_{i=1}^n \lambda_i = k\}
    \]
    correspond to subsets of $[n]$ of size $k$.  Since the inequality~\eqref{eqncv2} holds, by taking the convex combination of the above inequalities for all $S \subseteq [n]$ with $|S| = k$ (the vertices of the polytope), we obtain
    \begin{align*}
        \sum_{i=1}^n \lambda_i \IPhi(U;X_i)
        \leq \IPhi(U;X^n),
    \end{align*}
    for any $(\lambda_1,\dots,\lambda_n)$ in the polytope.
\end{proof}

\subsection{General independence structures}
Given a hypergraph $H=(V,E)$, we say that $H$ specifies the independence structure of the random variables $(X_1,\dots,X_n)$ if for $S\in E$, each $X_i$ for $i\in S$ are mutually independent. 

\begin{theorem} \label{thm:general-hypergraph}
    Suppose $\Phi\in \mathscr{F}$. Let $H=(V,E)$ be the hypergraph specifies the independence structure of $X_1,\dots,X_n$. Define $$F\defeq\{\vec{0},\vec{e}_1,\dots,\vec{e}_n\}\cup \{\vec{e}_{S}\}_{S\in E}$$
    where $\vec{e}_i\in\mathbb{R}^n$ stands for the $i$-th standard basis and $\vec{e}_S=\sum_{i\in S}\vec{e}_i$. Let $\Conv(F)$ denote the convex hull of $F$. Then 
    $$\Conv(F)\subseteq \fR_{\Phi}(X_1;\dots;X_n).$$ 
    
\end{theorem}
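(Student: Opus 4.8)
The plan is to show that each generator in $F$ lies in $\fR_\Phi(X_1;\dots;X_n)$, and then invoke convexity of the ribbon. The origin $\vec 0$ is trivially in the ribbon since $\sum_i 0\cdot\IPhi(U;X_i)=0\le \IPhi(U;X^n)$, the latter being nonnegative because $\Phi$ is convex (so $\HPhi$ is nonnegative). Each standard basis vector $\vec e_i$ lies in the ribbon because $\IPhi(U;X_i)\le \IPhi(U;X^n)$ follows from the data-processing / conditioning-reduces-entropy property: $X_i$ is a function of $X^n$, hence by the chain rule for $\HPhi$ applied with $f\ge 0$, $\E[f]=1$, one gets $\HPhi(\E[f\mid X_i])\le \HPhi(\E[f\mid X^n])$; specializing to the mutual-information form gives the claim (this is exactly the $n=1$ case of the $\Phi$-ribbon definition, and the monotonicity $\IPhi(U;X_i)\le\IPhi(U;X^n)$ is standard for $f$-divergences). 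For each hyperedge $S\in E$, the incidence vector $\vec e_S=\sum_{i\in S}\vec e_i$ lies in the ribbon because $\{X_i:i\in S\}$ are mutually independent, so by Proposition~\ref{prop:phi-ribbon-properties}(ii) applied to the subcollection (its $\Phi$-ribbon is the full cube $[0,1]^{|S|}$, in particular contains the all-ones vector), $\sum_{i\in S}\IPhi(U;X_i)\le \IPhi(U;X_S)$; combining with $\IPhi(U;X_S)\le \IPhi(U;X^n)$ (again monotonicity, since $X_S$ is a function of $X^n$) yields $\sum_{i\in S}\IPhi(U;X_i)\le \IPhi(U;X^n)$, i.e.\ $\vec e_S\in\fR_\Phi$.

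The final step is to note that $\fR_\Phi(X_1;\dots;X_n)$ is a convex set: it is the intersection over all auxiliary $U$ of the half-spaces $\{\vec\lambda\in[0,1]^n : \sum_i\lambda_i\IPhi(U;X_i)\le \IPhi(U;X^n)\}$, and an arbitrary intersection of convex sets is convex. Since $F\subseteq \fR_\Phi$ and $\fR_\Phi$ is convex, $\Conv(F)\subseteq\fR_\Phi$, which is the assertion.

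The only genuine points requiring care are (a) justifying the monotonicity $\IPhi(U;Y)\le\IPhi(U;X^n)$ whenever $Y$ is a deterministic function of $X^n$ — this is the data-processing inequality for $\Phi$-divergence, or equivalently follows from the conditioning-reduces-$\HPhi$ inequality already derived in the excerpt for the matrix case and analogously valid in the scalar case via the chain rule $\HPhi(f)=\HPhi(\E[f\mid Y])+\HPhi(f\mid Y)$ — and (b) making sure the subcollection statement in Proposition~\ref{prop:phi-ribbon-properties}(ii) genuinely applies, i.e.\ that mutual independence of $\{X_i:i\in S\}$ gives $\sum_{i\in S}\IPhi(U;X_i)\le\IPhi(U;X_S)$ for all $U$; this is exactly the content of that proposition with $n$ replaced by $|S|$. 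I do not expect any real obstacle here: the whole argument is a vertex-by-vertex verification plus convexity, entirely parallel in spirit to the proof of Theorem~\ref{thm:k-ind} but without needing the integrality/polytope-vertex structure, since here the generating set $F$ is given explicitly rather than identified as the vertex set of a constraint polytope.
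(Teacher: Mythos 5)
Your proposal is correct and follows essentially the same route as the paper: verify the inequalities $\IPhi(U;X_i)\le\IPhi(U;X^n)$ and, via Proposition~\ref{prop:phi-ribbon-properties}(ii) applied to the mutually independent subcollection $\{X_i: i\in S\}$, $\sum_{i\in S}\IPhi(U;X_i)\le\IPhi(U;X_S)\le\IPhi(U;X^n)$, then pass to convex combinations (your phrasing in terms of the ribbon being an intersection of half-spaces within $[0,1]^n$, hence convex, is the same step stated slightly more explicitly than in the paper). No gaps.
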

\begin{proof}
    For each $i\in [n]$, we have
    \begin{align} \label{ineq:proof-i}
        \IPhi(U;X_i) \leq\IPhi(U;X^n).
    \end{align}
    By Proposition \ref{prop:phi-ribbon-properties} (ii), for each $S\in E$, we have 
    \begin{align} \label{ineq:proof-S}
        \sum_{i\in S} \IPhi(U;X_i) \leq \IPhi(U;X_S)\leq \IPhi(U;X^n).
    \end{align}
    Taking the convex combination of \eqref{ineq:proof-i} and  \eqref{ineq:proof-S} for each $i$ and $S$, we obtain the desired inequality. 
\end{proof}

\section{Hypercontractivity and the Zhang--Yeung inequality}

The Zhang--Yeung inequality states that for any four random variables $Z$, $U$, $X$, and $Y$ (\cite{zhang1998characterization}, \cite[Theorem 14.7]{yeung2012first}):
\begin{align*}
    I(Z; U) \leq &I(Z; U|X)+I(Z; U|Y )+I(X; Y)+I(X; Z|U)+I(X; U|Z)+I(Z; U|X)
\end{align*}
and 
\begin{align*}
    2I(X;Y) \leq &I(Z;U) + I(Z;XY) + 3I(X;Y|Z) + I(X;Y|U).
\end{align*}
Moreover, a generalization of the Zhang-Yeung inequality to a 5 random variables ineuality was given in \cite[Theorem 2]{makarychev2002new}: 
\begin{align*}
    I(Z;U) \leq I(Z;U|X)+I(Z; U|Y)+I(X;Y)+I(Z;V|U)+I(Z; U|V)+I(U;V|Z).
\end{align*}

Renaming the variables $X\to S$,
$Y\to T$,
$Z\to X$,
$U\to Y$,
$V\to J$, the 5-variable non-Shannon type inequality can be rewritten in the following form: 
\begin{align*}
    \frac{2}{3}I(J;X)
    +\frac{2}{3}I(J;Y)-I(J;XY)\leq 
    \frac{1}{3}
    \bigg(I(S;T)+I(X;Y|S) +I(X;Y|T)\bigg).
\end{align*}
We write $A\independent B|C$ to denote that $A$ and $B$ are conditionally independent given $C$. Zhang and Yang \cite{zhang2002new} define $(S,T)$ to be the independent perfect coverings of $(X,Y)$ if $S\independent T$, $X\independent Y|S$ and $X\independent Y|T$. They also provide a non-trivial example of $(X,Y)$ such that there exist $(S,T)$ satisfying the above conditions. The existence of the independent perfect coverings implies that  $(2/3,2/3)$ is  in the HC ribbon of $(X,Y)$. 

\subsection{Multipartite generalization}
Zhang and Yang also provide a multipartite generalization of the Zhang-Yeung inequality \cite[Theorem 3]{zhang2002new}: 
\begin{align*}
    n I(J;X;Y)
    -I(J;XY)
    \leq \sum_{i=1}^n I(X;Y|T_i)
    +\sum_{i=1}^n H(T_i)-H(T^n),
\end{align*}
where $I(J;X;Y)=I(J;X)-I(J;X|Y)$. We will give another generalized version. 

\begin{theorem}
    For any random variables $U,X_1,\dots,X_n,S,T$, we have
    \begin{align*}
        \sum_{i=1}^n \gamma_i I(U;X_i)
        -I(U;X^n)
        \leq 
        \frac{1}{3}
        \Bigg(
        \sum_{i<j}
        \bigg(
        I(X_i;X_j|S)
        +I(X_i;X_j|T)
        \bigg)
        +I(S;T)
        \Bigg)
    \end{align*}
    if $(\gamma_1,\dots,\gamma_n)\in [0,1]^n$ and $\sum_{i=1}^{n} \gamma_i\leq \frac{4}{3}$. 
\end{theorem}
\begin{remark}
    If there exist independent $S,T$ such that given $(S,T)$, $(X_1,\dots,X_n)$ are pairwise independent, then $\{(\gamma_1,\dots,\gamma_n)\in [0,1]^n:\sum_{i=1}^n \gamma_i\leq \frac{4}{3}\}$ is a subset of the HC ribbon for $X_1,\dots,X_n$. 
\end{remark}
\begin{proof} Since the inequality depends only on the marginal distribution of $(U,X^n)$ and $(S,T,X^n)$, without loss of generality we may assume the Markov chain
\[U\rightarrow X^n\rightarrow (S,T).\]
    Applying Theorem \ref{thm:k-HC}, we have 
    \begin{align*}
        \sum_{i=1}^n \lambda_i I(U;X_i|S)
        -I(U;X^n|S)
        \leq \sum_{i<j} I(X_i;X_j|S)
    \end{align*}
    for $\lambda_i\geq 0$, $\sum_{i=1}^n \lambda_i=2$. By the chain rule, 
    \begin{align*}
        \sum_{i=1}^n \lambda_i 
        \bigg(
        I(U;X_i)+I(U;S|X_i)-I(U;S)
        \bigg)
        -I(U;X^n|S)
        \leq \sum_{i<j} I(X_i;X_j|S).
    \end{align*}
    By the nonnegativity of the mutual information, we have 
    \begin{align} \label{ineq:zy-1}
        \sum_{i=1}^n \lambda_i 
        I(U;X_i)
        -I(U;S)
        -I(U;X^nST)
        \leq \sum_{i<j} I(X_i;X_j|S).
    \end{align}
    Similarly, we have
    \begin{align} \label{ineq:zy-2}
        \sum_{i=1}^n \lambda_i 
        I(U;X_i)
        -I(U;T)
        -I(U;X^nST)
        \leq \sum_{i<j} I(X_i;X_j|T).
    \end{align}
    Moreover, we have
    \begin{align} \label{ineq:zy-3}
        I(U;S)+I(U;T)-I(U;X^nST)\leq I(S;T).
    \end{align}
    Summing up \eqref{ineq:zy-1}, \eqref{ineq:zy-2} and \eqref{ineq:zy-3}, we obtain
    \begin{align*}
        \sum_{i=1}^n 2\lambda_i I(U;X_i)
        -3I(U;X^nST)
        \leq 
        \sum_{i<j}
        \bigg(
        I(X_i;X_j|S)
        +I(X_i;X_j|T)
        \bigg)
        +I(S;T).
    \end{align*}
    Dividing both sides by $3$ and applying the Markov chain $U\to X^n \to ST$, we obtain the desired result. 
\end{proof}

\subsection{Generalization to the $\Phi$-mutual information}

We define the conditional $\Phi$-mutual information as
\begin{align*}
    \IPhi(X;Y|Z) \defeq \IPhi(X;YZ)-\IPhi(X;Z)\geq 0.
\end{align*}
While $\IPhi(X;Y)=\IPhi(Y;X)$,  the conditional $\Phi$-mutual information is not symmetric, i.e, $\IPhi(X;Y|Z)$ may not be equal to $\IPhi(Y;X|Z)$ in general. To the best of our knowledge, this definition of the conditional $\Phi$-mutual information is new. In general, $\IPhi(X;Y|Z)$ is not equal to $\sum_{z}p(z) \IPhi(X;Y|Z=z)$. However, $\IPhi(X;Y|Z)$ can be expressed in terms of the conditional $\Phi$-entropy (which has an averaging form)  as defined in \eqref{conditionalPhiEnt}: for $x\in \cX$, let $f_x(y,z)=\frac{p(x,y,z)}{p(x)p(y,z)}$. Then, one can verify that
\begin{align*}
\IPhi(X;Y|Z)&=\sum_{x} p(x)
    \HPhi(f_x(Y,Z)|Z)=\sum_{x} p(x)
    \sum_{z}p(z)\HPhi(f_x(Y,z)|z).
\end{align*}

\begin{lemma}\label{lmm5ni}
    We have the following properties of the $\Phi$-mutual information: 
    \setlist[1]{label=\rm(\roman*)}
\begin{enumerate}[\rm (i)]
    \item If $Y$ is a function of $X$, i.e., $Y=g(X)$,  $\IPhi(X;Y)=\IPhi(Y;Y)$. In particular, $\IPhi(XY;Y)=\IPhi(Y;Y)$.
    \item $\IPhi(X;Y|X)=0$.
    \item $\IPhi(X;XY|Z)=\IPhi(X;X|Z)$.
    \item $\IPhi(X;YY)=\IPhi(X;Y)$, \item $\IPhi(X;Y)\leq \min(\IPhi(X;X),\IPhi(Y;Y))$.
\end{enumerate}
For any $\Phi\in \mathscr{F}$, we have the following additional property:
\begin{enumerate}[resume]
 \item If random variables $X$ and $Y$ are independent, then $\IPhi(U;Y|X)\geq \IPhi(U;Y)$ for all $P_{U|XY}$. More generally, if $X\to Z\to Y$ forms a Markov chain, then
\begin{align} 
    \IPhi(U;Y|XZ)\geq \IPhi(U;Y|Z).
\end{align}
\end{enumerate}
\end{lemma}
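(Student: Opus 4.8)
The plan is to establish the six items in order, leaning on the chain-rule identity $\HPhi(f)=\HPhi(\E[f|Y])+\HPhi(f|Y)$, the nonnegativity of $\HPhi(\cdot|\cdot)$, and the representation $\IPhi(X;Y)=\sum_x p(x)\HPhi(f_x)$ with $f_x(y)=p(x,y)/(p(x)p(y))$ throughout. Items (i)--(v) are purely combinatorial rearrangements of these definitions and require no structural assumption on $\Phi$ beyond convexity. For (i), when $Y=g(X)$ the joint law $p_{XY}$ is supported on the graph of $g$, so $p(x,y)/(p(x)p(y))=p(x\mid y)/p(x)\cdot(1/1)$ collapses and one checks directly that $\sum_{x,y}p(x)p(y)\Phi(p(x,y)/(p(x)p(y)))$ equals the same expression computed with $X$ replaced by $Y$; the special case $\IPhi(XY;Y)=\IPhi(Y;Y)$ follows since $Y$ is a function of the pair $(X,Y)$. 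Item (ii) is immediate from the definition $\IPhi(X;Y|X)=\IPhi(X;YX)-\IPhi(X;X)$ together with (i) applied to the pair $(YX)$, which gives $\IPhi(X;YX)=\IPhi(X;X)$. Item (iii) is the conditional version of the same observation: $\IPhi(X;XY|Z)=\IPhi(X;XYZ)-\IPhi(X;Z)$, and since $X$ is a function of $(X,Y,Z)$ one reduces $\IPhi(X;XYZ)$, but more directly one applies (ii)-type reasoning conditionally; I would just expand both sides via the definition of conditional $\Phi$-mutual information and cancel. Item (iv) is trivial since duplicating a coordinate does not change the conditional law $p_{Y|X}$. Item (v) follows from (i): $\IPhi(X;Y)\le \IPhi(X;XY)=\IPhi(X;X)$ because $\IPhi$ is monotone under appending coordinates to the second argument (a consequence of $\IPhi(X;Y|W)\ge 0$, i.e.\ $\IPhi(X;YW)\ge\IPhi(X;Y)$), and symmetrically $\IPhi(X;Y)\le\IPhi(Y;Y)$.

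The substantive item is (vi), where the assumption $\Phi\in\mathscr{F}$ enters. The first assertion (independent $X,Y$) is the special case $Z$ constant of the Markov-chain statement, so I will prove only the latter: if $X\to Z\to Y$ then $\IPhi(U;Y|XZ)\ge \IPhi(U;Y|Z)$. Unwinding the definition, $\IPhi(U;Y|XZ)-\IPhi(U;Y|Z)=\IPhi(U;YXZ)-\IPhi(U;XZ)-\IPhi(U;YZ)+\IPhi(U;Z)$, so equivalently I must show the submodularity-type bound $\IPhi(U;XYZ)+\IPhi(U;Z)\ge \IPhi(U;XZ)+\IPhi(U;YZ)$ under the Markov condition $X\to Z\to Y$. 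The natural route is the $\Phi$-entropy subadditivity from $\mathscr{F}$: conditioned on $U=u$, define $g_u$ as the relevant density ratio; because $X\to Z\to Y$, the pair $(X,Y)$ is conditionally independent given $Z$, so on the conditional product-type structure the subadditivity of $\HPhi$ (the inequality $\HPhi(f)\le\sum_i\HPhi(f\mid X_{\widehat i})$ stated in the preliminaries, applied with the two ``coordinates'' $X$ and $Y$ over the fibre of $Z$) yields $\HPhi(g_u\mid Z)\le \HPhi(g_u\mid ZX)+\HPhi(g_u\mid ZY)$, which after averaging over $u$ and re-expressing each conditional $\HPhi$ as a difference of $\IPhi$ terms via the chain rule is exactly the claimed submodularity.

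The main obstacle I anticipate is the bookkeeping in (vi): one must choose the right function $g_u$ (likely $g_u(x,y,z)=p(u,x,y,z)/(p(u)p(x,y,z))$, or perhaps a ratio normalized so that $\E[g_u\mid Z]$ behaves well) and verify that the conditional-independence hypothesis $X\independent Y\mid Z$ is precisely what licenses applying the mutually-independent-case subadditivity of $\HPhi$ \emph{fibrewise} over $Z$ — i.e.\ that subadditivity, proved for globally independent coordinates, transfers to the conditional setting by applying it inside each conditional distribution $p_{XY\mid Z=z}=p_{X\mid z}p_{Y\mid z}$ and then taking the $p(z)$-average. Once that transfer is justified, matching the resulting conditional $\HPhi$ inequality to the four-term $\IPhi$ inequality is a routine application of the chain rule $\HPhi(f)=\HPhi(\E[f\mid\cdot])+\HPhi(f\mid\cdot)$ and the identity relating $\IPhi(\,\cdot\,;\,\cdot\mid\,\cdot)$ to conditional $\Phi$-entropy recorded just before the lemma. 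I would close by noting that the first sentence of (vi) is recovered by taking $Z$ deterministic, for which the Markov chain $X\to Z\to Y$ reduces to $X\independent Y$.
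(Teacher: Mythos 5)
Your proof is correct, and for items (i)--(v) it follows the same elementary computations as the paper (direct evaluation for (i) and (iv), definition-plus-(i) manipulations for (ii)--(iii), and monotonicity of $\IPhi$ under appending coordinates for (v); the paper bounds via the first argument, $\IPhi(X;Y)\le\IPhi(XY;Y)=\IPhi(Y;Y)$, while you bound via the second, but both rest on the same nonnegativity/data-processing fact). For the substantive item (vi) your route differs in one respect worth noting: the paper reduces, exactly as you do, to the per-$u$ inequality $\HPhi(f_u\mid XZ)\ge \HPhi(\E[f_u\mid YZ]\mid Z)$ for $f_u(x,y,z)=p(u,x,y,z)/(p(u)p(x,y,z))$, but then simply cites \cite[Lemma 7(b)]{beigi2018phi}, whereas you derive the same inequality from the subadditivity of $\HPhi$ quoted in the preliminaries, applied fibrewise over $Z$ (legitimate, since the Markov condition gives $p_{XY\mid Z=z}=p_{X\mid z}\otimes p_{Y\mid z}$ and the conditional $\Phi$-entropies are $p_Z$-averages), combined with the conditional chain rule $\HPhi(f\mid Z)=\HPhi(f\mid YZ)+\HPhi(\E[f\mid YZ]\mid Z)$. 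These two routes are essentially equivalent in strength: for two independent coordinates the chain rule converts subadditivity into precisely the two-variable inequality $\HPhi(f\mid X)\ge\HPhi(\E[f\mid Y])$ underlying the cited lemma, and your fibrewise application is its conditional form. What your variant buys is self-containedness (only facts already stated in the preliminaries are used); the bookkeeping you flagged does go through: averaging your inequality over $u$ with weights $p(u)$ and using $\sum_u p(u)\HPhi(\E[f_u\mid S])=\IPhi(U;S)$ for $S\in\{XYZ,XZ,YZ,Z\}$ yields the submodularity $\IPhi(U;XZ)+\IPhi(U;YZ)\le\IPhi(U;XYZ)+\IPhi(U;Z)$, i.e.\ $\IPhi(U;Y\mid XZ)\ge\IPhi(U;Y\mid Z)$, and the independent case is indeed the $Z$-constant specialization.
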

The proof is in Appendix \ref{app:proof-lmm5ni}.

We define the \emph{max-$\Phi$-mutual information} as 
\begin{align*}
    \wIPhi(X;Y) &\defeq \max_{U} \IPhi(Y;U)-\IPhi(U;Y|X) \\
    &= \max_{U} \IPhi(U;X)+\IPhi(U;Y)-\IPhi(U;XY).
\end{align*}
In Appendix \ref{app:proof-optimizer}, we show that to evaluate the above maximum, it suffices to consider random variables $U$ with cardinality at most $|\mathcal{X}||\mathcal{Y}|$. Then we can define the \emph{conditional max-$\Phi$-mutual information} : 
\begin{align} \label{opt:cond-IPhi}
    \wIPhi(X;Y|Z) 
    &= \max_{U} \IPhi(U;X|Z) +\IPhi(U;Y|Z) -\IPhi(U;XY|Z)\nonumber \\
    &= \max_{U} \IPhi(U;Y|Z)-\IPhi(U;Y|XZ) \nonumber\\
    &= \max_{U} \IPhi(U;XZ)+\IPhi(U;YZ)-\IPhi(U;XYZ)-\IPhi(U;Z).
\end{align}
Similarly, he cardinality of $U$ is at most $|\cX||\cY||\cZ|$. 

Although the optimizer $U$ for $\wIPhi(X;Y|Z)$ is not known for the general convex $\Phi$. We provide a class of $\Phi$ such that the optimizer $U$ can be restricted. In particular, to evaluate $\wIPhi(X;Y)$ the optimizer $U$ equals $XY$.
\begin{definition} \label{def:class-G}
    We define $\mathscr{G}$ to be the class of smooth convex functions $\Phi$, whose domain is a convex subset of $\R_+$, that are not affine and satisfy that $s^2\Phi''(s)$ is concave.
\end{definition}
Examples of functions in class $\mathscr{G}$ include $x\log(x), -\log(x), (\sqrt{x}-1)^2, x \log x - (1+x)\log(\frac{1+x}{2})$ and $\frac{1}{\alpha(\alpha-1)}(x^\alpha - 1)$ for $\alpha\in(0,1)$.
\begin{theorem} \label{thm:wIPhi-optimizer}
    Suppose $\Phi\in \mathscr{G}$. To solve \eqref{opt:cond-IPhi}, it suffices to consider $p_{U|XYZ}$ such that $H(XY|UZ)=0$.
\end{theorem}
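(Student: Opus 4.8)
The plan is to recast the maximization in \eqref{opt:cond-IPhi} as a concave--envelope computation and then to reduce the whole theorem to a single pointwise inequality which is exactly the defining property of $\mathscr{G}$.

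\textbf{Reformulation as a concave envelope.} Fix $p_{XYZ}$. For a channel $p_{U\mid XYZ}$ put $q^{u}:=p_{XYZ\mid u}\in\Delta(\cX\times\cY\times\cZ)$; each term $\IPhi(U;W)$ occurring in \eqref{opt:cond-IPhi} equals $\sum_u p(u)\,\DPhi(q^{u}_{W}\|p_{W})$, which is convex in the splitting and linear in the weights $p(u)$. Hence, with
\[
G(q):=\DPhi(q_{XZ}\|p_{XZ})+\DPhi(q_{YZ}\|p_{YZ})-\DPhi(q_{XYZ}\|p_{XYZ})-\DPhi(q_{Z}\|p_{Z}),
\]
the objective in \eqref{opt:cond-IPhi} equals $\sum_u p(u)G(q^{u})$, so $\wIPhi(X;Y\mid Z)=\widehat G(p_{XYZ})$, the value at $p_{XYZ}$ of the upper concave envelope of $G$ over $\Delta(\cX\times\cY\times\cZ)$ (an optimizer exists by the cardinality bound noted above). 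The choice $U=(X,Y,Z)$ satisfies $H(XY\mid UZ)=0$ and corresponds to the splitting supported on the point masses $\delta_{(x,y,z)}$ with weights $p(x,y,z)$, so it attains $\sum_{x,y,z}p(x,y,z)\,G(\delta_{(x,y,z)})=:L(p_{XYZ})$, where $L(q):=\sum_{x,y,z}q(x,y,z)\,G(\delta_{(x,y,z)})$ is the \emph{affine} functional interpolating the vertex values of $G$. It therefore suffices to prove the pointwise bound
\begin{equation}\tag{$\star$}
G(q)\ \le\ L(q)\qquad\text{for every }q\in\Delta(\cX\times\cY\times\cZ),
\end{equation}
because then $L$ is a concave majorant of $G$, so $\wIPhi(X;Y\mid Z)=\widehat G(p_{XYZ})\le L(p_{XYZ})$, while $L(p_{XYZ})$ is exactly the objective value of the admissible $U=(X,Y,Z)$; all inequalities collapse to equalities and the maximum is attained on $\{p_{U\mid XYZ}:H(XY\mid UZ)=0\}$, which is the assertion. (One can also check, via Lemma \ref{lmm5ni}, that every such $U$ gives the same value, but this is not needed.)

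\textbf{The pointwise inequality, and the role of $\mathscr{G}$.} Both sides of $(\star)$ are linear in $\Phi$ and agree at every vertex $\delta_{(x,y,z)}$, so $L(q)-G(q)$ is a ``$\Phi$--conditional--mutual--information''-type functional vanishing on point masses. For $\Phi(t)=t\log t$ a direct computation gives $L(q)-G(q)=I_q(X;Y\mid Z)\ge 0$; for $\Phi(t)=t^{2}/2$, which is \emph{not} in $\mathscr{G}$, $L(q)-G(q)$ is already strictly negative at uniform $q$, so a hypothesis on $\Phi$ is genuinely required, and the right one is $\Phi\in\mathscr{G}$, i.e.\ $s\mapsto s^{2}\Phi''(s)$ concave. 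I would establish $(\star)$ for $\Phi\in\mathscr{G}$ by exploiting linearity: a nonnegative concave function on $(0,\infty)$ is automatically nondecreasing, hence a positive mixture of the functions $s\mapsto 1$, $s\mapsto s$, and the kinks $s\mapsto\min(s,t_{0})$; dividing by $s^{2}$ and integrating twice shows that, modulo an affine summand (which contributes $0$ to $(\star)$), every $\Phi\in\mathscr{G}$ is a positive mixture of $-\log t$, $t\log t$, and the family $\Phi_{t_0}$ with $\Phi_{t_0}''(s)=\min(s,t_{0})/s^{2}$. So it is enough to verify $(\star)$ for these building blocks — equivalently, for the Tsallis divergences $\Phi_\alpha(t)=(t^{\alpha}-1)/(\alpha(\alpha-1))$, $\alpha\in[0,1]$, and their endpoint limits — which can be done by a direct Jensen/rearrangement argument (for $\Phi=-\log t$ one has $L\equiv+\infty$, so $(\star)$ is trivial). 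A more hands-on alternative is a perturbation argument at an optimal $p_{U\mid XYZ}$: if $H(XY\mid UZ)>0$, split some atom $u$ along a two-point move confined to a single $Z$-slice; the second derivative of $\sum_u p(u)G(q^{u})$ along that move is a weighted sum of values of $\Phi''$, and concavity of $s^{2}\Phi''(s)$ is precisely what makes this second derivative nonnegative along a direction that decreases $H(XY\mid UZ)$, contradicting optimality.

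\textbf{Main obstacle.} The one substantive step is $(\star)$ — converting ``$s^{2}\Phi''(s)$ concave'' into the scalar inequality $L\ge G$; everything else (the envelope reformulation, and the fact that the bound is certified by the single explicit choice $U=(X,Y,Z)$) is routine bookkeeping. Since $(\star)$ is tight, proving it must use the full $\mathscr{G}$ hypothesis, and the mixture reduction to the Tsallis family is, I expect, the cleanest way to organize the computation.
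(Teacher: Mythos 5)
Your reformulation of \eqref{opt:cond-IPhi} as an upper-concave-envelope computation for $G$ is exactly the first step of the paper's proof, but the pointwise inequality $(\star)$ to which you reduce everything is false, and your main route collapses with it. The theorem only asserts that an optimizer can be found with $H(XY\mid UZ)=0$; it does \emph{not} assert that the point-mass splitting $U=(X,Y,Z)$ is optimal, which is what $(\star)$ (i.e.\ $G\le L$ for the affine interpolation $L$ of the vertex values) would force. Concretely, take $\Phi(t)=\frac{1}{\alpha(\alpha-1)}(t^{\alpha}-1)$ with $\alpha=\tfrac12$, i.e.\ $\Phi(t)=4-4\sqrt{t}$, which the paper lists as a member of $\mathscr{G}$; let $X,Y,Z$ be i.i.d.\ uniform bits, and let $q(x,y,z)=\tfrac12\,\mathbbm{1}\{x=y=0\}$, i.e.\ $q=\delta_{(0,0)}\otimes p_Z$. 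Then $\DPhi(q_{XZ}\|p_{XZ})=\DPhi(q_{YZ}\|p_{YZ})=4-2\sqrt2$, $\DPhi(q_{XYZ}\|p_{XYZ})=2$, $\DPhi(q_Z\|p_Z)=0$, so $G(q)=6-4\sqrt2\approx 0.343$; on the other hand, for every $(x,y,z)$ one computes $G(\delta_{(x,y,z)})=2+2-(4-\sqrt2)-(4-2\sqrt2)=3\sqrt2-4\approx 0.243$, so $L(q)=3\sqrt2-4<G(q)$ (since $10>7\sqrt2$). Thus $(\star)$ fails inside $\mathscr{G}$, and indeed for exactly the Tsallis building blocks you propose to reduce to. Moreover, in this example the channel $U=(X,Y)$ (which also satisfies $H(XY\mid UZ)=0$) attains objective value $6-4\sqrt2$, while $U=(X,Y,Z)$ attains only $L(p_{XYZ})=3\sqrt2-4$; so $U=(X,Y,Z)$ is strictly suboptimal and your claimed closed form $\wIPhi(X;Y\mid Z)=L(p_{XYZ})$ is false in general. (It does match the paper's Corollary in the unconditional case, where all $U$ with $H(XY\mid U)=0$ give the same value by Lemma~\ref{lmm5ni}; the conditional case is genuinely different.)

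The missing idea is that $\Phi\in\mathscr{G}$ only gives convexity of $G$ along directions that keep the $Z$-marginal fixed: the paper shows that for fixed $q_Z$ and $q_{Y|XZ}$ the map $q_{X|Z}\mapsto G(q)$ is convex (via a Jensen step using concavity of $s\mapsto s^2\Phi''(s)$), and symmetrically for $q_{Y|Z}$, so each component of a decomposition can be refined until $(X,Y)$ is deterministic given $Z$ --- but $q_Z$ is never split. Splitting $q_Z$ is precisely what your $(\star)$ demands, and on a slice where $(X,Y)$ is a function of $Z$ the restriction of $G$ can be strictly \emph{concave} in $q_Z$ (as the example shows), so that refinement strictly decreases the objective. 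Your ``hands-on alternative'' at the end --- a second-order perturbation confined to a single $Z$-slice --- is essentially the paper's argument and is the salvageable part of your proposal, but as written it is only a sketch; the route you actually commit to (mixture representation of $\mathscr{G}$ plus $(\star)$ for the Tsallis family) cannot be repaired, because $(\star)$ is false for those very functions.
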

The proof is in Appendix \ref{app:proof-optimizer}. 
\begin{corollary}
    Suppose $\Phi\in \mathscr{G}$. Then
    \begin{align*}
        \wIPhi(X;Y) =\IPhi(X;X)+\IPhi(Y;Y)-\IPhi(XY;XY). 
    \end{align*}
\end{corollary}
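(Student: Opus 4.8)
The plan is to derive the corollary directly from Theorem~\ref{thm:wIPhi-optimizer} by specializing to $Z=\text{const}$ (trivial conditioning) and unpacking what the constraint $H(XY\mid UZ)=0$ forces. With trivial $Z$, Theorem~\ref{thm:wIPhi-optimizer} says that to evaluate
\[
\wIPhi(X;Y)=\max_U \IPhi(U;X)+\IPhi(U;Y)-\IPhi(U;XY),
\]
it suffices to restrict to $p_{U\mid XY}$ with $H(XY\mid U)=0$, i.e.\ $XY$ is a deterministic function of $U$. So first I would state this reduction explicitly.

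Second, I would argue that among all such $U$ (with $XY=g(U)$ for some function $g$), the choice $U=XY$ is optimal, so that the max is attained there. The key tool is Lemma~\ref{lmm5ni}: when $XY$ is a function of $U$, part~(i) gives $\IPhi(U;XY)=\IPhi(XY;XY)$, which is a \emph{constant} independent of the particular $U$; meanwhile for the positive terms, $\IPhi(U;X)\le \IPhi(XY;X)$ and $\IPhi(U;Y)\le \IPhi(XY;Y)$ should follow from a data-processing/monotonicity argument (since $X$ and $Y$ are functions of $XY$, which is in turn a function of $U$, the pair $(X,Y)$ is a function of $U$, and $\IPhi(U;X)$ can be bounded via the fact that $X$ factors through $XY$ — concretely $\IPhi(U;X)=\IPhi(g(U);X)=\IPhi(XY;X)$ would need $X$ to be a function of $g(U)=XY$, which it is). Actually the cleanest route: since $XY=g(U)$, we have $\IPhi(U;X)=\IPhi(U;g(U),X)=\IPhi(U;XY,X)=\IPhi(U;XY)$ is \emph{wrong} — so instead I would use that $X$ is a function of $XY$ and invoke the appropriate monotonicity (a $\Phi$-divergence data processing inequality applied to the channel $XY\to$ nothing) to get $\IPhi(U;X)\le \IPhi(U;XY)$... but that goes the wrong way too. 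The correct statement is $\IPhi(U;X)=\IPhi(XY;X)$ when $XY$ is a function of $U$: this is because $\IPhi(U;X)$ depends on $U$ only through the conditional law $p_{X\mid U}$, and since $XY$ is a deterministic function of $U$, refining $XY$ to $U$ cannot change $p_{X\mid U}$ beyond what $XY$ already determines — more carefully, by the chain rule $\IPhi(U;X) \le \IPhi(U;X,XY)=\IPhi(U;XY)$ using $X$ is determined by $XY$, and also $\IPhi(XY;X)\le \IPhi(U;X)$ by data processing on $U\to XY$. Hmm, one needs equality. I would handle this by noting $\IPhi(U;XY)=\IPhi(XY;XY)$ is fixed and then observing that the objective $\IPhi(U;X)+\IPhi(U;Y)-\IPhi(U;XY)$ is maximized by making $\IPhi(U;X)$ and $\IPhi(U;Y)$ as large as possible, and by Lemma~\ref{lmm5ni}(v), $\IPhi(U;X)\le \IPhi(X;X)$ and $\IPhi(U;Y)\le \IPhi(Y;Y)$; these upper bounds are simultaneously achieved at $U=XY$ since $X$ and $Y$ are functions of $XY$, so $\IPhi(XY;X)=\IPhi(X;X)$ and $\IPhi(XY;Y)=\IPhi(Y;Y)$ by Lemma~\ref{lmm5ni}(i).

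Putting this together: for any admissible $U$ (with $XY$ a function of $U$),
\[
\IPhi(U;X)+\IPhi(U;Y)-\IPhi(U;XY)\le \IPhi(X;X)+\IPhi(Y;Y)-\IPhi(XY;XY),
\]
where the first two terms are bounded by Lemma~\ref{lmm5ni}(v) and the last term equals $\IPhi(XY;XY)$ by Lemma~\ref{lmm5ni}(i) (as $XY=g(U)$). Conversely, taking $U=XY$ achieves equality in all three places, again by Lemma~\ref{lmm5ni}(i). Hence the maximum equals $\IPhi(X;X)+\IPhi(Y;Y)-\IPhi(XY;XY)$, which is the claimed formula.

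The main obstacle I anticipate is the delicate bookkeeping in the second step — making sure the upper bound $\IPhi(U;X)\le\IPhi(X;X)$ from Lemma~\ref{lmm5ni}(v) is tight precisely at $U=XY$ and that the negative term $-\IPhi(U;XY)$ is genuinely constant (equal to $-\IPhi(XY;XY)$) across \emph{all} admissible $U$, not just at the optimum. The first is immediate from Lemma~\ref{lmm5ni}(i) since $X=g_1(XY)$; the second is exactly Lemma~\ref{lmm5ni}(i) applied with "$X$"$\,=U$ and "$Y$"$\,=XY$, valid because $XY$ is a function of $U$ by Theorem~\ref{thm:wIPhi-optimizer}. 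Everything else is a one-line combination, so modulo this verification the corollary follows.
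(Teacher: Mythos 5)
Your proposal is correct and follows the route the paper intends: specialize Theorem~\ref{thm:wIPhi-optimizer} to trivial $Z$ so that $XY$ is a function of $U$, note $\IPhi(U;XY)=\IPhi(XY;XY)$ is then constant by Lemma~\ref{lmm5ni}(i), bound $\IPhi(U;X)\le\IPhi(X;X)$ and $\IPhi(U;Y)\le\IPhi(Y;Y)$ by Lemma~\ref{lmm5ni}(v), and check that $U=XY$ attains all three bounds via Lemma~\ref{lmm5ni}(i). The false starts in your middle paragraph are discarded by you and do not affect the final argument, which matches the paper's reasoning.
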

For $\Phi(t)=t\log t$, the right hand side reduces to $I(X;Y)$. 
\begin{remark}
    In a concurrent work \cite{AbinSubadditiveInformation}, the second author and his collaborators show that for every $\Phi\in\mathscr{G}$ and for any probability mass functions $P_X,Q_X$ on $\cX$ and $P_Y,Q_Y$ on $\cY$, we have
    \begin{align*}
        \DPhi\big(P_X\otimes P_Y \,\big\|\, Q_X\otimes Q_Y\big)
        \le \DPhi(P_X\|Q_X) + \DPhi(P_Y\|Q_Y).
    \end{align*}
\end{remark}

Now we show that $\wIPhi(X;Y)$ and $\wIPhi(X;Z|Y)$ are true independence measures. 
\begin{theorem} \label{thm:tilde-IPhi}
    We have the following properties for any convex $\Phi$: 
    \setlist[1]{label=\rm(\roman*)}
    \begin{enumerate}
        \item For any $X$, $Y$ and $Z$, $\wIPhi(X;Y|Z)\geq \IPhi(X;Y|Z)\geq 0$, 
        \item If $\Phi$ is strictly convex at $x=1$, $\wIPhi(X;Y)=0$ implies independence of $X$ and $Y$,
        \item $\wIPhi(X;Y)\leq\wIPhi(X;X)=\IPhi(X;X)$ and $\wIPhi(X;XY)=\wIPhi(X;X)=\IPhi(X;X)$.
    \end{enumerate}
    For any $\Phi\in \mathscr{F}$, we have the following additional properties: 
    \begin{enumerate}[resume]
        \item If random variables $X$ and $Y$ are independent, then $\wIPhi(X;Y)=0$. 
        \item If random variables $X$, $Y$ and $Z$ form a Markov chain $X\to Z\to Y$, then $\wIPhi(X;Y|Z)=0$. 
    \end{enumerate}
\end{theorem}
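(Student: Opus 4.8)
The plan is to obtain all five parts as short consequences of Lemma~\ref{lmm5ni} and of the three equivalent forms of $\wIPhi(X;Y|Z)$ in \eqref{opt:cond-IPhi}, in each case by substituting a well-chosen auxiliary variable $U$ (usually $U$ constant or $U=X$) into the maximization and then invoking the appropriate item of Lemma~\ref{lmm5ni}.

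\textbf{Parts (i) and (ii).} For (i) I would take $U=X$ in the first line of \eqref{opt:cond-IPhi}: by Lemma~\ref{lmm5ni}(iii) we have $\IPhi(X;XY|Z)=\IPhi(X;X|Z)$, so the objective collapses to $\IPhi(X;Y|Z)$ and thus $\wIPhi(X;Y|Z)\ge\IPhi(X;Y|Z)$; the bound $\IPhi(X;Y|Z)\ge0$ is the termwise Jensen nonnegativity of the conditional $\Phi$-entropy noted after \eqref{conditionalPhiEnt}. For (ii) I would specialize (i) to trivial $Z$ to get $0=\wIPhi(X;Y)\ge\IPhi(X;Y)\ge0$, hence $\IPhi(X;Y)=0$; since $\IPhi(X;Y)=\sum_{x,y}p(x)p(y)\Phi\!\bigl(\tfrac{p(x,y)}{p(x)p(y)}\bigr)-\Phi(1)$ is precisely the Jensen gap for the random variable $\tfrac{p(x,y)}{p(x)p(y)}$ under $p_Xp_Y$ (whose mean is $1$), strict convexity of $\Phi$ at $1$ forces this variable to be a.s.\ $1$, i.e.\ $p(x,y)=p(x)p(y)$ whenever $p(x)p(y)>0$, which is independence.

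\textbf{Part (iii).} Using $\wIPhi(X;X)=\max_U\bigl(\IPhi(U;X)+\IPhi(U;X)-\IPhi(U;XX)\bigr)$ together with $\IPhi(U;XX)=\IPhi(U;X)$ (Lemma~\ref{lmm5ni}(iv)), this equals $\max_U\IPhi(U;X)=\IPhi(X;X)$, the maximum being attained at $U=X$ and the bound $\IPhi(U;X)\le\IPhi(X;X)$ coming from Lemma~\ref{lmm5ni}(v). For general $Y$, in $\wIPhi(X;Y)=\max_U\bigl(\IPhi(U;X)+\IPhi(U;Y)-\IPhi(U;XY)\bigr)$ the last two terms combine into $-\IPhi(U;X|Y)\le0$, so the objective is at most $\IPhi(U;X)\le\IPhi(X;X)$; hence $\wIPhi(X;Y)\le\IPhi(X;X)=\wIPhi(X;X)$. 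Applying this with $Y$ replaced by $(X,Y)$ gives $\wIPhi(X;XY)\le\IPhi(X;X)$, while part (i) together with Lemma~\ref{lmm5ni}(i) (with $X$ a function of $(X,Y)$, so $\IPhi(X;XY)=\IPhi(X;X)$) gives the reverse inequality, and equality follows.

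\textbf{Parts (iv) and (v).} Here $\Phi\in\mathscr{F}$, and I would use the second line of \eqref{opt:cond-IPhi}, $\wIPhi(X;Y|Z)=\max_U\bigl(\IPhi(U;Y|Z)-\IPhi(U;Y|XZ)\bigr)$. When $X\to Z\to Y$ is a Markov chain, Lemma~\ref{lmm5ni}(vi) gives $\IPhi(U;Y|XZ)\ge\IPhi(U;Y|Z)$ for every $U$, so the objective is $\le0$, whereas $U$ constant makes it $0$; hence $\wIPhi(X;Y|Z)=0$, which is (v). Part (iv) is the case of trivial $Z$, where the Markov condition reduces to $X\perp Y$ and $\IPhi(U;Y|X)\ge\IPhi(U;Y)$ is the first half of Lemma~\ref{lmm5ni}(vi). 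The only step that is more than bookkeeping is the Jensen equality analysis in (ii) (isolating ``strict convexity at $1$'' as exactly the hypothesis that makes the argument run); everything else is direct substitution into the relevant maximization plus an appeal to Lemma~\ref{lmm5ni}.
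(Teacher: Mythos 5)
Your proposal is correct and follows essentially the same route as the paper: $U=X$ with Lemma~\ref{lmm5ni}(iii) for (i), reduction to $\IPhi(X;Y)=0$ for (ii), the Lemma~\ref{lmm5ni}(iv)--(v) bounds for (iii), and Lemma~\ref{lmm5ni}(vi) applied to the second form of \eqref{opt:cond-IPhi} for (iv)--(v). The only differences are cosmetic: you prove the strict-convexity-at-$1$ step directly instead of citing \cite{sason2016f}, and you get $\wIPhi(X;XY)=\IPhi(X;X)$ by matching upper and lower bounds rather than by the paper's direct evaluation of the maximization.
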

The proof is in Appendix \ref{app:proof-thm-IPhi}.

Then we can derive the $\Phi$-mutual information generalization of the Zhang--Yeung inequality. 
\begin{theorem}
    Suppose $\Phi$ is convex. For any random variables $J,X,Y,S,T$, we have
    \begin{align*}
        \frac{2}{3} \IPhi(J;X)
        +\frac{2}{3} \IPhi(J;Y)
        -\IPhi(J;XY)
        \leq \frac{1}{3} \bigg(
        \wIPhi(S;T)
        +\wIPhi(X;Y|S)
        +\wIPhi(X;Y|T)
        \bigg).
    \end{align*}
    Suppose $\Phi\in \mathscr{F}$. If there exists $S,T$ such that $S\independent T$, $X\independent Y|S$ and $X\independent Y|T$, then $(2/3,2/3)$ is in the $\Phi$-ribbon. 
\end{theorem}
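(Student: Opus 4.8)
The plan is to mirror the proof of the multipartite Zhang--Yeung inequality given above, but with every Shannon mutual information replaced by the appropriate $\Phi$-mutual information, and with the applications of nonnegativity and the Markov-chain bound replaced by the corresponding facts about $\wIPhi$ from Theorem~\ref{thm:tilde-IPhi} and Lemma~\ref{lmm5ni}. Since the claimed inequality involves only the marginals of $(J,X,Y)$ and of $(X,Y,S,T)$, I would first argue that we may assume a Markov chain $J\to XY\to ST$; this is legitimate because $\IPhi(J;X)$, $\IPhi(J;Y)$ and $\IPhi(J;XY)$ depend only on $p_{JXY}$, while the right-hand side depends only on $p_{XYST}$, so we are free to couple them through $p_{ST|XY}$.

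The core computation would go as follows. Take $n=2$ in the spirit of Theorem~\ref{thm:k-HC}, i.e.\ work with the two variables $X_1=X$, $X_2=Y$, and aim to establish three inequalities analogous to \eqref{ineq:zy-1}, \eqref{ineq:zy-2}, \eqref{ineq:zy-3}. Concretely: (a) using the definition of $\wIPhi(X;Y|S)$ as a maximum over auxiliaries, specialized to the auxiliary $J$, together with the chain rule $\IPhi(J;X|S)=\IPhi(J;XS)-\IPhi(J;S)$ and the relation $\IPhi(J;XS)=\IPhi(J;X)+\IPhi(J;S|X)\ge \IPhi(J;X)$ from the definition of conditional $\Phi$-mutual information and its nonnegativity (Lemma~\ref{lmm5ni}), bound $\tfrac23\IPhi(J;X)+\tfrac23\IPhi(J;Y) - (\text{something involving }S) \le \tfrac13\wIPhi(X;Y|S) + (\text{remainder})$; (b) the symmetric statement with $S$ replaced by $T$; (c) a bound of the form $\IPhi(J;S)+\IPhi(J;T)-\IPhi(J;STXY)\le \wIPhi(S;T)$, which is exactly the definition of $\wIPhi(S;T)$ evaluated at the auxiliary $J$ — note $\IPhi(J;STXY)\ge\IPhi(J;ST)$ so one can afford the larger term. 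Summing (a), (b), (c) with the right weights, the intermediate $\IPhi(J;S)$, $\IPhi(J;T)$, $\IPhi(J;STXY)$ terms should cancel or combine, leaving $2\bigl(\IPhi(J;X)+\IPhi(J;Y)\bigr) - 3\IPhi(J;XYST) \le \wIPhi(S;T)+\wIPhi(X;Y|S)+\wIPhi(X;Y|T)$; dividing by $3$ and using the assumed Markov chain to replace $\IPhi(J;XYST)$ by $\IPhi(J;XY)$ gives the claim. The coefficient bookkeeping (the $2/3$, $1/3$, and the way $\lambda_1=\lambda_2=1$ feeds in with the sum constraint $\sum\lambda_i=2$) must be done carefully but is routine.

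For the second (``ribbon'') part: if $S\independent T$, $X\independent Y\mid S$, and $X\independent Y\mid T$, then Theorem~\ref{thm:tilde-IPhi}(iv)--(v) (valid for $\Phi\in\mathscr{F}$) give $\wIPhi(S;T)=0$, $\wIPhi(X;Y|S)=0$, $\wIPhi(X;Y|T)=0$, because $X\independent Y\mid S$ means $X\to S\to Y$ is a (degenerate) Markov chain, hence $\wIPhi(X;Y|S)=0$, and similarly for $T$, and $S\independent T$ gives $\wIPhi(S;T)=0$. Plugging into the inequality yields $\tfrac23\IPhi(J;X)+\tfrac23\IPhi(J;Y)\le\IPhi(J;XY)$ for every $J$, i.e.\ $\sum_i\lambda_i\IPhi(J;X_i)\le\IPhi(J;X^n)$ with $(\lambda_1,\lambda_2)=(2/3,2/3)$, which is precisely the statement that $(2/3,2/3)\in\fR_\Phi(X;Y)$.

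The main obstacle I anticipate is step (a): the passage from $\IPhi(J;X)$, $\IPhi(J;Y)$ to the conditional quantities $\IPhi(J;X|S)$, $\IPhi(J;Y|S)$ is \emph{not} a clean chain rule in the $\Phi$ world because conditional $\Phi$-mutual information is not symmetric and does not decompose additively the way Shannon's does. In the Shannon proof one freely writes $I(U;X_i|S)=I(U;X_i)+I(U;S|X_i)-I(U;S)$; the $\Phi$-analogue $\IPhi(J;X|S)=\IPhi(J;XS)-\IPhi(J;S)$ holds by definition, but relating $\IPhi(J;XS)$ to $\IPhi(J;X)$ requires the inequality $\IPhi(J;XS)\ge\IPhi(J;X)$, which follows from $\IPhi(J;XS)=\IPhi(J;X)+\IPhi(J;S|X)$ and nonnegativity of conditional $\Phi$-mutual information (Lemma~\ref{lmm5ni}), so one only gets inequalities, not equalities — hence the role of $\wIPhi$ as an upper-bounding device is essential, and one must check the inequality directions line up so that the slack always accumulates on the correct side. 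A secondary point to verify is that $\wIPhi(X;Y|S)$ as defined in \eqref{opt:cond-IPhi} dominates the relevant combination when the auxiliary is taken to be $J$ rather than the maximizer, which is immediate from the $\max$ but worth stating.
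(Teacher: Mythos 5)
Your proposal is correct and follows essentially the same route as the paper's proof: reduce WLOG to the Markov chain $J\to (X,Y)\to (S,T)$, instantiate the maxima defining $\wIPhi(S;T)$, $\wIPhi(X;Y|S)$, $\wIPhi(X;Y|T)$ at the auxiliary $J$, relax via the nonnegativity of conditional $\Phi$-mutual information so all three bounds share the term $\IPhi(J;XYST)$, sum, divide by $3$, and use the Markov chain to replace $\IPhi(J;XYST)$ by $\IPhi(J;XY)$; the ribbon claim then follows from Theorem~\ref{thm:tilde-IPhi}(iv)--(v) exactly as you state.
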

\begin{proof} Since the expression depends only on the marginal distribution on $(J,X,Y)$ and $(X,Y,S,T)$, without loss of generality assume that
$$J\rightarrow (X,Y)\rightarrow (S,T).$$
    By definition, we have
    \begin{align*} 
        \IPhi(J;S)+\IPhi(J;T)
        -\IPhi(J;ST)\leq \wIPhi(S;T),
    \end{align*}
    Then 
    \begin{align} \label{ineq:IPhi-JST}
        \IPhi(J;S)+\IPhi(J;T)
        -\IPhi(J;XYST)\leq \wIPhi(S;T),
    \end{align}
    Again by the definition, 
    \begin{align*} 
        \IPhi(J;X|S)+\IPhi(J;Y|S)
        -\IPhi(J;XY|S)
        \leq \wIPhi(X;Y|S).
    \end{align*}
    Then we obtain
    \begin{align} \label{ineq:IPhi-JXYS}
         \IPhi(J;X)+\IPhi(J;Y) 
         -\IPhi(J;XYTS)-\IPhi(J;S)
         &\leq \IPhi(J;XS)+\IPhi(J;YS) 
         -\IPhi(J;XYS)-\IPhi(J;S)\nonumber \\
         &=\IPhi(J;X|S) +\IPhi(J;Y|S)
        -\IPhi(J;XY|S)\nonumber \\
        &\leq \wIPhi(X;Y|S)
    \end{align}
    where the first line follows from the nonnegativity of $\Phi$-mutual information and the second line follows from the chain rule. 

    Similarly, we have
    \begin{align}\label{ineq:IPhi-JXYT}
        \IPhi(J;X)+\IPhi(J;Y) 
         -\IPhi(J;XYTS)-\IPhi(J;T)
         &\leq \wIPhi(X;Y|T)
    \end{align}

    Summing up \eqref{ineq:IPhi-JST}, \eqref{ineq:IPhi-JXYS} and \eqref{ineq:IPhi-JXYT}, we obtain 
    \begin{align*}
        \frac{2}{3} \IPhi(J;X)
        +\frac{2}{3} \IPhi(J;Y)
        -\IPhi(J;XYTS)
        \leq \frac{1}{3} \bigg(
        \wIPhi(S;T)
        +\wIPhi(X;Y|S)
        +\wIPhi(X;Y|T)
        \bigg). 
    \end{align*}
   Using the Markov chain $J\to XY\to TS$, we obtain
    \begin{align*}
        \frac{2}{3} \IPhi(J;X)
        +\frac{2}{3} \IPhi(J;Y)
        -\IPhi(J;XY)
        \leq \frac{1}{3} \bigg(
        \wIPhi(S;T)
        +\wIPhi(X;Y|S)
        +\wIPhi(X;Y|T)
        \bigg).
    \end{align*}

    Now suppose $S\independent T$, $X\independent Y|S$ and $X\independent Y|T$. By Theorem \ref{thm:tilde-IPhi}, $\wIPhi(S;T)+\wIPhi(X;Y|S)+\wIPhi(X;Y|T)=0$. Thus $(2/3,2/3)$ is in the $\Phi$-ribbon. 
\end{proof}

\section{Matrix $\Phi$-ribbon}

We define the matrix $\Phi$-ribbon for the first time as follows:
\begin{definition}
    For $\Phi\in \mathscr{F}_M$, the matrix $\Phi$-ribbon $\fR_{\Phi}^M(X;Y)$ for the random variables $(X,Y)$ is the set of all $(\lambda_1,\lambda_2)\in [0,1]^2$ such that 
    \begin{align*}
        \lambda_1 \HPhi(\E[\mt{F}|X]) +\lambda_2 \HPhi(\E[\mt{F}|Y]) \leq \HPhi(\mt{F}),
    \end{align*}
    for all PSD matrix-valued $\mt F: \cX\times \cY\to \PSD$ with any dimensions.  
\end{definition}
When we restrict the dimension of $\mt{F}(x,y)$ to be one, Definition \ref{def:class-FM} aligns with Definition \ref{def:class-F}; thus the matrix $\Phi$-ribbon reduces to the ordinary $\Phi$-ribbon. If we assume that $\Phi(t)=t\log t$ and $\mt{F}(x,y)$ has trace 1 for all $x,y$, $\HPhi(\mt F)$, $\HPhi(\E[\mt F|X])$ and $\HPhi(\E[\mt F|Y])$ become the quantum entropy of classical-quantum ensembles (see \cite{cheng2019matrix}). 

The following lemma will be useful when we prove the tensorization and the data processing of the matrix $\Phi$-ribbon. The proofs follow similar lines as in \cite{beigi2018phi}.
\begin{lemma} \label{lem:matrix-Phi-ineq}
    \setlist{label=\rm (\roman*)}
    \begin{enumerate}
        \item \label{matrix-Phi-ineq-1} Assume $X$ and $Y$ are independent random variables and $\mt F=\mt F(X,Y)$ is an arbitrary random positive semidefinite matrix of $(X,Y)$. Then for any $\Phi\in \mathscr{F}_M$, we have
        \begin{align*}
            \HPhi(\mt F|X)\geq \HPhi(\E[\mt F|Y]).
        \end{align*}
        \item \label{matrix-Phi-ineq-2} More generally, if $X,Y,Z$ are three random variables satisfying the Markov chain condition $X\rightarrow Z \rightarrow Y$ and $\mt F=\mt F(X,Y,Z)$, we have
        \begin{align*}
             \HPhi(\mt F|XZ)\geq \HPhi(\E[\mt F|YZ]|Z).
        \end{align*}
    \end{enumerate}
\end{lemma}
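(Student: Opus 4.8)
The plan is to mirror the scalar argument from \cite{beigi2018phi}, using the matrix $\Phi$-entropy chain rule~\eqref{matrix-Phi:chain-rule} together with subadditivity of the matrix $\Phi$-entropy for $\Phi\in\mathscr{F}_M$. Part~\ref{matrix-Phi-ineq-1} is the base case and part~\ref{matrix-Phi-ineq-2} is its conditional lift, so I would prove~\ref{matrix-Phi-ineq-1} first and then deduce~\ref{matrix-Phi-ineq-2} by conditioning on $Z$ and invoking the Markov structure.

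\emph{Step 1 (part (i)).} Since $X$ and $Y$ are independent, apply subadditivity of the matrix $\Phi$-entropy (valid because $\Phi\in\mathscr{F}_M$) to the function $\mt F=\mt F(X,Y)$ with the two coordinates $X$ and $Y$:
\[
    \HPhi(\mt F)\le \HPhi(\mt F\mid X)+\HPhi(\mt F\mid Y).
\]
Now apply the chain rule~\eqref{matrix-Phi:chain-rule} in the variable $Y$, namely $\HPhi(\mt F)=\HPhi(\E[\mt F\mid Y])+\HPhi(\mt F\mid Y)$. Substituting and cancelling the common term $\HPhi(\mt F\mid Y)$ (which is finite and nonnegative), we get exactly $\HPhi(\E[\mt F\mid Y])\le \HPhi(\mt F\mid X)$, which is~\ref{matrix-Phi-ineq-1}.

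\emph{Step 2 (part (ii)).} Condition everything on $Z=z$. Given $Z=z$, the Markov chain $X\to Z\to Y$ means $X$ and $Y$ are conditionally independent, so part~\ref{matrix-Phi-ineq-1} applied to the function $(x,y)\mapsto \mt F(x,y,z)$ under the conditional law $p_{XY\mid Z=z}$ gives $\HPhi(\E[\mt F\mid Y,Z=z])\le \HPhi(\mt F\mid X,Z=z)$. Multiply by $p(z)$ and sum over $z$: the right side becomes $\HPhi(\mt F\mid XZ)$ by the definition of conditional matrix $\Phi$-entropy, and the left side becomes $\sum_z p(z)\,\HPhi(\E[\mt F\mid Y,Z=z])$, which is precisely $\HPhi(\E[\mt F\mid YZ]\mid Z)$ since $\E[\mt F\mid Y,Z=z]$ is, for fixed $z$, the relevant function of $Y$. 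This yields $\HPhi(\mt F\mid XZ)\ge \HPhi(\E[\mt F\mid YZ]\mid Z)$, as claimed.

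\emph{Main obstacle.} The only delicate point is bookkeeping: making sure the conditional matrix $\Phi$-entropy $\HPhi(\cdot\mid X)$ that appears in the matrix subadditivity statement of \cite{tropp2014subadditivity} is exactly the averaged quantity $\sum_x p(x)\HPhi(\mt F\mid X=x)$ used in the chain rule~\eqref{matrix-Phi:chain-rule}, and that in Step~2 the inner $\Phi$-entropy $\HPhi(\E[\mt F\mid Y,Z=z])$ is computed with respect to $p_{Y\mid Z=z}$ so that summing against $p(z)$ reproduces $\HPhi(\E[\mt F\mid YZ]\mid Z)$ rather than an unconditioned version. Both are immediate from the definitions given in the preliminaries, so no genuinely new estimate is needed; the argument is a direct transcription of the scalar proof with ``$\Phi$-entropy'' replaced by ``matrix $\Phi$-entropy'' throughout.
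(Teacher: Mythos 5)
Your proof is correct, and part (ii) coincides with the paper's argument (apply part (i) conditionally on $Z=z$, then average over $z$), but your part (i) takes a genuinely different route. The paper proves (i) directly from Definition~\ref{def:class-FM}~\ref{def-joint-convexity1}: the map $\mt F \mapsto \E[\Phi(\mt F)]-\Phi(\E[\mt F])$ is convex for $\Phi\in\mathscr{F}_M$, and a single application of Jensen's inequality over $Y$ (averaging the conditional entropies $\HPhi(\mt F(\cdot,y))$ against $p_Y$) yields $\HPhi(\mt F|X)\geq \HPhi(\E[\mt F|Y])$. You instead take the Tropp--Chen subadditivity theorem as a black box, specialize it to two independent variables, and subtract the chain-rule identity $\HPhi(\mt F)=\HPhi(\E[\mt F|Y])+\HPhi(\mt F|Y)$; the cancellation is legitimate since all quantities are finite, so the deduction is sound and logically non-circular within this paper, because subadditivity is imported as an established result rather than derived from the lemma. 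The trade-off: your argument is shorter and purely formal, but it inverts the usual logical order (in the scalar setting of \cite{beigi2018phi}, inequalities of this type are the ingredient from which subadditivity is proved), and it only gives the two-variable consequence of a stronger theorem; the paper's Jensen argument is self-contained, exposes that the inequality is a one-line consequence of the defining convexity of $\mathscr{F}_M$, and is the template that generalizes directly to the conditional version in (ii).
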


\begin{proposition} \label{prop:ribbon-indep}
    For $\Phi\in \mathscr{F}_M$, if $X$ is independent of $Y$, then $\fR_{\Phi}^M(X;Y)=[0,1]^2$. 
\end{proposition}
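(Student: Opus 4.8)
The plan is to show that for every $\lambda_1,\lambda_2\in[0,1]$ and every PSD matrix-valued $\mt F=\mt F(X,Y)$ we have $\lambda_1\HPhi(\E[\mt F|X])+\lambda_2\HPhi(\E[\mt F|Y])\le\HPhi(\mt F)$. Since $\lambda_1,\lambda_2\le 1$ and each matrix $\Phi$-entropy term is non-negative, it suffices to establish the two separate inequalities $\HPhi(\E[\mt F|X])\le\HPhi(\mt F)$ and $\HPhi(\E[\mt F|Y])\le\HPhi(\mt F)$ and then add $\lambda_1$ times the first to $\lambda_2$ times the second, bounding $\lambda_i$ by $1$ on the left. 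So the whole statement reduces to a single symmetric claim: when $X\independent Y$, $\HPhi(\mt F)\ge\HPhi(\E[\mt F|Y])$ (and the same with $X,Y$ swapped).

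The key step is to invoke the chain rule \eqref{matrix-Phi:chain-rule} for the matrix $\Phi$-entropy, applied with the conditioning variable $X$:
\begin{align*}
    \HPhi(\mt F)=\HPhi(\E[\mt F|X])+\HPhi(\mt F|X).
\end{align*}
Both summands on the right are non-negative (the matrix $\Phi$-entropy is non-negative because $\tr\circ\Phi$ is convex, as noted after \eqref{matrix-Phi:chain-rule}), so in particular $\HPhi(\mt F)\ge\HPhi(\mt F|X)$. Now I apply Lemma~\ref{lem:matrix-Phi-ineq}\ref{matrix-Phi-ineq-1}, which states precisely that when $X\independent Y$ one has $\HPhi(\mt F|X)\ge\HPhi(\E[\mt F|Y])$. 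Chaining these two inequalities gives $\HPhi(\mt F)\ge\HPhi(\E[\mt F|Y])$. By the symmetry of the hypothesis $X\independent Y$, the same argument with the roles of $X$ and $Y$ interchanged (conditioning on $Y$ in the chain rule, then applying the lemma) yields $\HPhi(\mt F)\ge\HPhi(\E[\mt F|X])$.

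Combining, for any $\lambda_1,\lambda_2\in[0,1]$,
\begin{align*}
    \lambda_1\HPhi(\E[\mt F|X])+\lambda_2\HPhi(\E[\mt F|Y])
    \le \HPhi(\E[\mt F|X])\cdot\mathbbm{1}+\lambda_2\HPhi(\E[\mt F|Y])
    \le \HPhi(\mt F),
\end{align*}
where more carefully we use $\lambda_1\HPhi(\E[\mt F|X])\le\HPhi(\mt F)$ scaled... actually the cleanest phrasing is: since both $\HPhi(\E[\mt F|X])\le\HPhi(\mt F)$ and $\HPhi(\E[\mt F|Y])\le\HPhi(\mt F)$ hold and all quantities are non-negative, and since $\lambda_1+\lambda_2$ could exceed $1$ we cannot just take a convex combination — instead we need the stronger fact. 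Let me reconsider: actually we want $\lambda_1 A+\lambda_2 B\le C$ where $A,B\le C$ and $\lambda_i\in[0,1]$; this is false in general (take $\lambda_1=\lambda_2=1$, $A=B=C$). So the reduction in the first paragraph is wrong, and the real obstacle is obtaining the \emph{joint} bound directly. The correct route: use the chain rule twice in nested form. Write $\HPhi(\mt F)=\HPhi(\E[\mt F|X])+\HPhi(\mt F|X)$ and separately bound $\HPhi(\mt F|X)\ge\HPhi(\E[\mt F|Y])\ge\lambda_2\HPhi(\E[\mt F|Y])$ by Lemma~\ref{lem:matrix-Phi-ineq}\ref{matrix-Phi-ineq-1} (with $\lambda_2\le1$ and non-negativity), while $\HPhi(\E[\mt F|X])\ge\lambda_1\HPhi(\E[\mt F|X])$; adding gives $\HPhi(\mt F)\ge\lambda_1\HPhi(\E[\mt F|X])+\lambda_2\HPhi(\E[\mt F|Y])$, as desired. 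The only subtlety to get right — and the step I expect to need the most care — is this bookkeeping with the $\lambda_i$'s and confirming that Lemma~\ref{lem:matrix-Phi-ineq}\ref{matrix-Phi-ineq-1} applies with $\mt F$ depending on both $X$ and $Y$ (it does, as stated); everything else is an immediate consequence of the chain rule and non-negativity. Hence $\fR_\Phi^M(X;Y)=[0,1]^2$, the reverse inclusion $\fR_\Phi^M(X;Y)\subseteq[0,1]^2$ being immediate from the definition.
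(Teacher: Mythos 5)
Your final, self-corrected argument is correct and is essentially the paper's own proof: decompose $\HPhi(\mt F)=\HPhi(\E[\mt F|X])+\HPhi(\mt F|X)$ by the chain rule \eqref{matrix-Phi:chain-rule}, bound $\HPhi(\mt F|X)\ge\HPhi(\E[\mt F|Y])$ via Lemma~\ref{lem:matrix-Phi-ineq}\ref{matrix-Phi-ineq-1} using $X\independent Y$, and absorb the coefficients $\lambda_1,\lambda_2\le 1$ by non-negativity of the matrix $\Phi$-entropy. The false start in your first paragraph (deducing $\lambda_1 A+\lambda_2 B\le C$ from $A,B\le C$) was indeed invalid, but you correctly caught and replaced it, so the proof as finally stated has no gap.
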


\begin{theorem} \label{thm:tensor-dpi}
    For any $\Phi\in \mathscr{F}_M$, the matrix $\Phi$-ribbon satisfies the tensorization and the data processing as follows:
    \begin{enumerate}[\rm (i)]
        \item {\rm (Tensorization)} If $(X_1,Y_1)$ is independent of $(X_2,Y_2)$, then
        \begin{align*}
            \fR_{\Phi}^M(X_1X_2;Y_1Y_2) =\fR_{\Phi}^M(X_1;Y_1)\cap \fR_{\Phi}^M(X_2;Y_2).
        \end{align*}
        \item {\rm (Data processing)} If $(X_2,Y_2)$ is generated from $(X_1,Y_1)$ by $p(x_2,y_2|x_1,y_1)=p(x_2|x_1)p(y_2|y_1)$, then 
        \begin{align*}
            \fR_{\Phi}^M(X_1;Y_1) \subseteq \fR_{\Phi}^M(X_2;Y_2).
        \end{align*}
    \end{enumerate} 
\end{theorem}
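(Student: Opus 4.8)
The plan is to follow the standard template for ribbon tensorization and data processing (as in \cite{beigi2018phi}), using the chain rule \eqref{matrix-Phi:chain-rule} for the matrix $\Phi$-entropy together with Lemma \ref{lem:matrix-Phi-ineq} as the two workhorses, and invoking Proposition \ref{prop:ribbon-indep} to handle the cross terms that appear in the tensorized source.

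For the tensorization statement, I would prove the two inclusions separately. The inclusion $\fR_{\Phi}^M(X_1X_2;Y_1Y_2)\subseteq\fR_{\Phi}^M(X_1;Y_1)\cap\fR_{\Phi}^M(X_2;Y_2)$ should follow by restriction: given $\mt F=\mt F(X_1,Y_1)$ depending only on the first coordinate, note that $\E[\mt F\mid X_1X_2]=\E[\mt F\mid X_1]$ and similarly for $Y$ (because $(X_2,Y_2)$ is independent of $(X_1,Y_1)$), so the defining inequality for $(X_1X_2;Y_1Y_2)$ collapses to the one for $(X_1;Y_1)$; symmetrically for the second coordinate. For the reverse inclusion, take $(\lambda_1,\lambda_2)\in\fR_{\Phi}^M(X_1;Y_1)\cap\fR_{\Phi}^M(X_2;Y_2)$ and an arbitrary $\mt F=\mt F(X_1,X_2,Y_1,Y_2)$. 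The idea is to peel off one coordinate at a time via the chain rule: write $\HPhi(\mt F)=\HPhi(\E[\mt F\mid X_2Y_2])+\HPhi(\mt F\mid X_2Y_2)$, apply the ribbon inequality for $(X_2;Y_2)$ to the inner conditional entropies (conditioning on $(X_1,Y_1)$ as a spectator, which is legitimate since $(X_2,Y_2)$ is independent of it — here one averages the $(X_2;Y_2)$-ribbon inequality over the realizations of $(X_1,Y_1)$), then apply the $(X_1;Y_1)$-ribbon inequality to $\HPhi(\E[\mt F\mid X_2Y_2])$ viewed as a function of $(X_1,Y_1)$. Collecting terms and using the chain rule once more in the form $\HPhi(\E[\mt F\mid X_1])=\HPhi(\E[\mt F\mid X_1X_2])+\dots$ together with Lemma \ref{lem:matrix-Phi-ineq}\ref{matrix-Phi-ineq-1} to bound the residual conditional terms by $\HPhi(\E[\mt F\mid X_1])$, one should recover $\lambda_1\HPhi(\E[\mt F\mid X_1X_2])+\lambda_2\HPhi(\E[\mt F\mid Y_1Y_2])\le\HPhi(\mt F)$. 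The bookkeeping of exactly which conditioning appears where is the fiddly part; I expect this to be the main obstacle, since one must be careful that every application of Lemma \ref{lem:matrix-Phi-ineq} has its independence/Markov hypothesis genuinely satisfied.

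For the data processing statement, the standard reduction is: if $(X_2,Y_2)$ is obtained from $(X_1,Y_1)$ through the product channel $p(x_2\mid x_1)p(y_2\mid y_1)$, consider the joint law of $(X_1,X_2,Y_1,Y_2)$; then $X_2\to X_1\to Y_1\to Y_2$ and in particular $X_2$ is conditionally independent of everything else given $X_1$, and likewise for $Y_2$ given $Y_1$. Given $\mt G=\mt G(X_2,Y_2)$, set $\mt F(X_1,Y_1)\defeq\E[\mt G\mid X_1Y_1]$, which by the product structure of the channel equals $\E[\mt G\mid X_1Y_1]$ with the two channels acting independently. One checks $\E[\mt F\mid X_1]=\E[\mt G\mid X_1]$ and, using Lemma \ref{lem:matrix-Phi-ineq}\ref{matrix-Phi-ineq-2} (with the Markov chain $X_2\to X_1\to\cdots$), that $\HPhi(\E[\mt G\mid X_1])\ge\HPhi(\E[\mt G\mid X_2])$, and symmetrically for $Y$; also $\HPhi(\mt F)\le\HPhi(\mt G)$ by the conditioning-reduces-entropy inequality \eqref{matrix-Phi:cond-reduce} applied appropriately (or directly from the chain rule). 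Feeding these three estimates into the ribbon inequality for $(X_1;Y_1)$ applied to $\mt F$ yields the ribbon inequality for $(X_2;Y_2)$ applied to $\mt G$, giving $(\lambda_1,\lambda_2)\in\fR_{\Phi}^M(X_2;Y_2)$. As usual, data processing can alternatively be deduced as a corollary of tensorization plus Proposition \ref{prop:ribbon-indep} by introducing an independent auxiliary source realizing the channel, and I would mention that route as well if the direct argument gets unwieldy.
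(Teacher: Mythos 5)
Your template (chain rule \eqref{matrix-Phi:chain-rule} plus Lemma \ref{lem:matrix-Phi-ineq} plus Proposition \ref{prop:ribbon-indep}) is the same one the paper uses, and your easy inclusion for tensorization is fine; but in the hard direction the pieces you describe do not fit together, and the step you defer is the actual proof. With the decomposition you chose, $\HPhi(\mt F)=\HPhi(\E[\mt F|X_2Y_2])+\HPhi(\mt F|X_2Y_2)$, the outer term is a function of $(X_2,Y_2)$ and the inner term, for each fixed $(x_2,y_2)$, is a function of $(X_1,Y_1)$; so the $(X_2;Y_2)$-ribbon must be applied to the outer term and the $(X_1;Y_1)$-ribbon conditionally to the inner one, whereas you assign them the other way around (and call $\E[\mt F|X_2Y_2]$ a function of $(X_1,Y_1)$, which it is not). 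More importantly, your closing step --- ``bound the residual conditional terms by $\HPhi(\E[\mt F|X_1])$'' --- aims at the wrong quantity: the target is $\lambda_1\HPhi(\E[\mt F|X_1X_2])+\lambda_2\HPhi(\E[\mt F|Y_1Y_2])$, and by \eqref{matrix-Phi:cond-reduce} one has $\HPhi(\E[\mt F|X_1X_2])\ge\HPhi(\E[\mt F|X_1])$, so a bound in terms of $\HPhi(\E[\mt F|X_1])$ cannot close the argument (also the identity you quote, $\HPhi(\E[\mt F|X_1])=\HPhi(\E[\mt F|X_1X_2])+\cdots$, is not the chain rule; the chain rule reads $\HPhi(\E[\mt F|X_1X_2])=\HPhi(\E[\mt F|X_1])+\HPhi(\E[\mt F|X_1X_2]\,|\,X_1)$). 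What is actually needed, and what the paper does, is: apply the $(X_1;Y_1)$-ribbon to $\E[\mt F|X_1Y_1]$, apply the $(X_2;Y_2)$-ribbon conditionally given $(X_1,Y_1)$ to get $\HPhi(\mt F|X_1Y_1)\ge\lambda_1\HPhi(\E[\mt F|X_1X_2Y_1]|X_1Y_1)+\lambda_2\HPhi(\E[\mt F|X_1Y_1Y_2]|X_1Y_1)$, and then use Lemma \ref{lem:matrix-Phi-ineq} \ref{matrix-Phi-ineq-2} with the Markov chains $X_2\to X_1\to Y_1$ and $X_1\to Y_1\to Y_2$ (together with the slack terms $\HPhi(\E[\mt F|X_1])$, $\HPhi(\E[\mt F|Y_1])$ produced by the outer application) to reach $\HPhi(\E[\mt F|X_1X_2])$ and $\HPhi(\E[\mt F|Y_1Y_2])$. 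You explicitly flag this bookkeeping as the obstacle and leave it undone.

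For data processing, your direct argument contains a false step: for $\mt G=\mt G(X_2,Y_2)$ the claimed inequality $\HPhi(\E[\mt G|X_1])\ge\HPhi(\E[\mt G|X_2])$ does not hold in general, and Lemma \ref{lem:matrix-Phi-ineq} \ref{matrix-Phi-ineq-2} cannot deliver it. Counterexample (scalar, $\Phi(t)=t^2/2\in\mathscr{F}_M$, so $\HPhi$ is half the variance): let $Y_1,Y_2$ be constant, $X_1\sim\Bern(1/2)$, $X_2$ the output of a binary symmetric channel with crossover $\delta\in(0,1/2)$, and $\mt G=X_2$; then $\Var(\E[\mt G|X_2])=1/4$ while $\Var(\E[\mt G|X_1])=(1-2\delta)^2/4<1/4$. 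Consequently, a single application of the $(X_1;Y_1)$-ribbon to $\mt F=\E[\mt G|X_1Y_1]$ cannot yield the claim: the information lost in passing to $\E[\mt G|X_1Y_1]$ must be recovered from the conditional part $\HPhi(\mt G|X_1Y_1)$. The paper's proof does exactly this: it applies the $(X_1;Y_1)$-ribbon to $\E[\mt G|X_1Y_1]$ \emph{and} Proposition \ref{prop:ribbon-indep} conditionally given $(X_1,Y_1)$ (legitimate because $X_2\independent Y_2\mid(X_1,Y_1)$ for the product channel), and then converts the resulting conditional terms into $\HPhi(\E[\mt G|X_2])$ and $\HPhi(\E[\mt G|Y_2])$ via Lemma \ref{lem:matrix-Phi-ineq} \ref{matrix-Phi-ineq-2} and \eqref{matrix-Phi:cond-reduce}. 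Your fallback route (realize the channel as $X_2=g(X_1,W)$, $Y_2=h(Y_1,V)$ with $(W,V)$ independent noise, use tensorization with $\fR_{\Phi}^M(W;V)=[0,1]^2$, and then monotonicity of the ribbon under deterministic functions of each marginal, which follows from \eqref{matrix-Phi:cond-reduce}) is a valid alternative, but you only gesture at it, and it presupposes part (i), which your write-up has not yet established.
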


The proof of the above properties of the matrix $\Phi$-ribbon can be found in Appendix \ref{app:proof-matrix-phi}. 

Let $\eta_{\Phi}^M(X;Y)$ denote the infimum of all constants $c$ such that 
\begin{align*}
    c\HPhi(\mt F) \geq \HPhi(\E[\mt F|Y]),
\end{align*}
for all positive semidefinite $\mt F(X)$. 

Suppose $X,Y$ is a doubly symmetric binary source (DSBS) with parameter $\lambda$, denoted as $\DSBS(\lambda)$ where the joint distribution is given by 
\begin{align*}
    p_{XY}=\begin{pmatrix}
        \frac{1+\lambda}{4}, &\frac{1-\lambda}{4}\\
        \frac{1-\lambda}{4}, &\frac{1+\lambda}{4}
    \end{pmatrix}.
\end{align*}

\begin{proposition} \label{prop:M-Phi-SDPI}
    If $(X,Y)$ are distributed according to $\DSBS(\lambda)$, then for any $\Phi \in \mathscr{F}^M$, we have
    \begin{align*}
        \eta_{\Phi}^M(X;Y)=\lambda^2. 
    \end{align*}
\end{proposition}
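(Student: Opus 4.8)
The plan is to prove matching upper and lower bounds on $\eta_{\Phi}^M(X;Y)$. For the lower bound $\eta_{\Phi}^M(X;Y)\ge\lambda^2$, I would exhibit a one-dimensional ($d=1$) scalar test function $\mt F(x)=f(x)$ and compute $\HPhi(f)$ and $\HPhi(\E[f\mid Y])$ explicitly. For $\DSBS(\lambda)$ one takes $f$ to be a small perturbation of the constant function $1$, say $f(0)=1+\epsilon$, $f(1)=1-\epsilon$ (matching the uniform marginal so that $\E[f]=1$); then $\E[f\mid Y=y]$ is again an affine perturbation of $1$ with perturbation scaled exactly by the correlation coefficient $\lambda$, since the backward channel of a $\DSBS(\lambda)$ contracts the mean-zero part of any function by $\lambda$. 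Taylor-expanding $\HPhi$ to second order gives $\HPhi(f)=\tfrac12\Phi''(1)\Var(f)+o(\epsilon^2)$ and $\HPhi(\E[f\mid Y])=\tfrac12\Phi''(1)\Var(\E[f\mid Y])+o(\epsilon^2)=\tfrac12\Phi''(1)\lambda^2\Var(f)+o(\epsilon^2)$, using $\Phi''(1)>0$ because $\Phi$ is convex and not affine. Letting $\epsilon\to0$ yields the ratio $\lambda^2$, hence $\eta_{\Phi}^M(X;Y)\ge\lambda^2$.

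For the upper bound $\eta_{\Phi}^M(X;Y)\le\lambda^2$, the key is to realize the $\DSBS(\lambda)$ backward channel $p_{X\mid Y}$ as a mixture: with probability $\lambda^2$ it is the identity (i.e., $X=Y$), and with probability $1-\lambda^2$ it outputs a uniform bit independent of $Y$; a direct check of the transition probabilities confirms $\tfrac{1\pm\lambda}{2} = \lambda^2\cdot[\text{identity}] + (1-\lambda^2)\cdot\tfrac12$ only when $\lambda\ge0$, so I would note that $\eta$ depends on $|\lambda|$ and take $\lambda\ge0$ WLOG (or reparametrize). Given this representation, write $\E[\mt F\mid Y=y] = \lambda^2\,\mt F(y) + (1-\lambda^2)\,\E[\mt F]$, where $\mt F(y)$ here means the value of the PSD matrix function when $X=y$ and $\E[\mt F]=\tfrac12(\mt F(0)+\mt F(1))$. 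Then, by joint convexity of $\mt A\mapsto\nortr\Phi(\mt A)$ (convexity of $\tr\Phi$, since $\Phi$ is convex) applied inside the definition of $\HPhi$, one bounds
\begin{align*}
    \HPhi(\E[\mt F\mid Y])
    &= \nortr\!\Big(\E_Y\big[\Phi(\lambda^2\mt F(Y)+(1-\lambda^2)\E[\mt F])\big] - \Phi(\E[\mt F])\Big)\\
    &\le \nortr\!\Big(\lambda^2\,\E_Y[\Phi(\mt F(Y))] + (1-\lambda^2)\Phi(\E[\mt F]) - \Phi(\E[\mt F])\Big)\\
    &= \lambda^2\,\nortr\!\big(\E[\Phi(\mt F)] - \Phi(\E[\mt F])\big) = \lambda^2\,\HPhi(\mt F),
\end{align*}
where in the first step I used that $Y$ is uniform so $\E_Y[\mt F(Y)] = \E[\mt F]$. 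This uses only convexity of $\Phi$ (not membership in $\mathscr{F}_M$), which is consistent with the statement holding for all $\Phi\in\mathscr{F}^M$.

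The main obstacle I anticipate is the lower-bound computation: one must verify rigorously that the second-order Taylor expansion of the matrix $\Phi$-entropy around a scalar point behaves as in the scalar case, and that the remainder is genuinely $o(\epsilon^2)$ uniformly — but since the test function is one-dimensional this reduces to the ordinary scalar $\Phi$-entropy, for which smoothness of $\Phi$ on $\R_{++}$ (guaranteed near $t=1$) makes the expansion routine. A secondary point to handle cleanly is the sign of $\lambda$: the mixture decomposition of the backward channel requires $\lambda\in[0,1]$, so I would state at the outset that $\eta_{\Phi}^M$ depends on $\lambda$ only through $\lambda^2$ (equivalently, relabeling $X\mapsto 1-X$ sends $\lambda\mapsto-\lambda$ and leaves the matrix $\Phi$-entropies unchanged), reducing to the case $\lambda\ge0$. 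Assembling the two bounds gives $\eta_{\Phi}^M(X;Y)=\lambda^2$.
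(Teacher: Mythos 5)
Your upper-bound argument has a genuine gap, and it is the heart of the proposition. The erasure-type decomposition of the $\DSBS(\lambda)$ backward channel puts weight $\lambda$, not $\lambda^2$, on the identity component: $\tfrac{1+\lambda}{2}=\lambda\cdot 1+(1-\lambda)\cdot\tfrac12$, whereas $\lambda^2\cdot 1+(1-\lambda^2)\cdot\tfrac12=\tfrac{1+\lambda^2}{2}\neq\tfrac{1+\lambda}{2}$ for $\lambda\in(0,1)$. So the correct identity is $\E[\mt F\mid Y=y]=\lambda\,\mt F(y)+(1-\lambda)\,\E[\mt F]$, and your Jensen/convexity step then yields only $\HPhi(\E[\mt F\mid Y])\le\lambda\,\HPhi(\mt F)$, i.e.\ $\eta_{\Phi}^M(X;Y)\le\lambda$, strictly weaker than the claimed $\lambda^2$. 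Your own remark that the bound ``uses only convexity of $\Phi$'' is the warning sign: the conclusion $\eta=\lambda^2$ is false for general smooth convex $\Phi$ even in the scalar case (e.g.\ $\Phi(t)=\sqrt{(t-c)^2+\epsilon^2}$ with small $\epsilon$ gives a contraction ratio close to $\lambda$ rather than $\lambda^2$), so membership in $\mathscr{F}_M$ must enter essentially; no argument relying on plain convexity of $\tr\Phi$ can close the gap between $\lambda$ and $\lambda^2$.

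For comparison, the paper's proof shows exactly where $\mathscr{F}_M$ is used. Writing $\mt F\in\{\mt M-\mt Z,\mt M+\mt Z\}$ with $\mt M=\E[\mt F]$, so that $\E[\mt F\mid Y]\in\{\mt M-\lambda\mt Z,\mt M+\lambda\mt Z\}$, the desired inequality becomes $\lambda^2\psi(1)\ge\psi(\lambda^2)$ for $\psi(t)=\tr\bigl(\Phi(\mt M+\sqrt{t}\,\mt Z)+\Phi(\mt M-\sqrt{t}\,\mt Z)-2\Phi(\mt M)\bigr)$; since $\psi(0)=0$, it suffices that $\psi$ be convex in $t$, and this convexity is established via Fr\'echet-derivative computations that invoke the joint convexity of $(\mt A,\mt B)\mapsto\tr\,\D^2\Phi(\mt A)(\mt B,\mt B)$, condition (v) of Definition \ref{def:class-FM}. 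Your lower-bound perturbation argument (a $d=1$ test function near the constant $1$, contracted by $\lambda$ through the backward channel, with a second-order Taylor expansion) is fine in principle, and the reduction to $\lambda\ge0$ is harmless; but as written the proposal does not prove the upper bound, hence not the proposition.
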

The proof is in Appendix \ref{app:proof-SDPI}.

The previous results about the $\Phi$-ribbon can be easily extended to the matrix $\Phi$-ribbon. We state the theorems below without proof. 
\begin{theorem}
    Suppose $\Phi\in \mathscr{F}_M$. Let  $X_1,\dots,X_n$ be $k$-wise independent. Then
    $\{ (\lambda_1,\dots,\lambda_n)\in [0,1]^n:\sum_{i=1}^n \lambda_i\leq k\}\subseteq \fR_{\Phi}^M(X_1;\dots;X_n)$ for any $X_1,\dots,X_n$.
\end{theorem}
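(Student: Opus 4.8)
The plan is to mirror, almost verbatim, the three-step argument used for Theorem~\ref{thm:k-ind}, with the scalar fact Proposition~\ref{prop:phi-ribbon-properties}(ii) replaced by its matrix analogue. First I would make the statement precise by defining the $n$-partite matrix $\Phi$-ribbon $\fR_{\Phi}^M(X_1;\dots;X_n)$ as the set of $(\lambda_1,\dots,\lambda_n)\in[0,1]^n$ with $\sum_{i=1}^n\lambda_i\HPhi(\E[\mt F\mid X_i])\le\HPhi(\mt F)$ for every positive semidefinite matrix-valued $\mt F=\mt F(X^n)$ of arbitrary dimension. Then the goal reduces to: (a) for every $k$-subset $S\subseteq[n]$, establish $\sum_{i\in S}\HPhi(\E[\mt F\mid X_i])\le\HPhi(\mt F)$; and (b) run the polytope/convexity argument of Theorem~\ref{thm:k-HC}(i) to pass from the vertices $\vec{e}_S$ ($|S|=k$) to the whole region $\{\sum_i\lambda_i\le k\}$.

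For (a) the key auxiliary statement I would isolate is the matrix counterpart of ``the $\Phi$-ribbon of mutually independent variables is the full cube'': \emph{if $Y_1,\dots,Y_m$ are mutually independent and $\mt G=\mt G(Y^m)$ is PSD-valued, then $\sum_{i=1}^m\HPhi(\E[\mt G\mid Y_i])\le\HPhi(\mt G)$}. I would prove this by induction on $m$. The case $m=1$ is trivial and $m=2$ is exactly Proposition~\ref{prop:ribbon-indep} at $(\lambda_1,\lambda_2)=(1,1)$ (equivalently, Lemma~\ref{lem:matrix-Phi-ineq}\ref{matrix-Phi-ineq-1} together with the chain rule \eqref{matrix-Phi:chain-rule}). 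For the inductive step, put $W=(Y_1,\dots,Y_{m-1})$, which is independent of $Y_m$; the $m=2$ case applied to the pair $(W,Y_m)$ gives $\HPhi(\E[\mt G\mid W])+\HPhi(\E[\mt G\mid Y_m])\le\HPhi(\mt G)$, while the inductive hypothesis applied to $\mt{G}'=\E[\mt G\mid W]$ (a function of the mutually independent $Y_1,\dots,Y_{m-1}$) gives $\sum_{i=1}^{m-1}\HPhi(\E[\mt{G}'\mid Y_i])\le\HPhi(\E[\mt G\mid W])$; since $\sigma(Y_i)\subseteq\sigma(W)$, the tower property for matrix conditional expectations yields $\E[\mt{G}'\mid Y_i]=\E[\mt G\mid Y_i]$, and adding the two displays closes the induction. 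To get (a) from this, I would take $\mt G=\E[\mt F\mid X_S]$, which is PSD-valued and a function of the mutually independent $(X_i)_{i\in S}$: the auxiliary statement gives $\sum_{i\in S}\HPhi(\E[\mt F\mid X_i])=\sum_{i\in S}\HPhi(\E[\mt G\mid X_i])\le\HPhi(\mt G)$, and $\HPhi(\mt G)=\HPhi(\E[\mt F\mid X_S])\le\HPhi(\E[\mt F\mid X^n])=\HPhi(\mt F)$ by ``conditioning reduces matrix $\Phi$-entropy'' \eqref{matrix-Phi:cond-reduce} with the splitting $X^n=(X_S,X_{S^c})$. Finally, for (b): given $\lambda\in[0,1]^n$ with $\sum_i\lambda_i\le k$, raise its coordinates (staying in $[0,1]^n$, which is possible since $k\le n$) to some $\lambda'\ge\lambda$ with $\sum_i\lambda'_i=k$; since $\HPhi(\E[\mt F\mid X_i])\ge0$ it suffices to treat $\lambda'$, and by Theorem~\ref{thm:k-HC}(i) we have $\lambda'=\sum_{|S|=k}\omega_S\vec{e}_S$ with nonnegative weights summing to $1$, so summing the inequalities from (a) with weights $\omega_S$ gives $\sum_{i=1}^n\lambda'_i\HPhi(\E[\mt F\mid X_i])\le\HPhi(\mt F)$.

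The only place where anything new happens is the auxiliary statement in (a); everything else is the bookkeeping already carried out for the scalar $\Phi$-ribbon, and the polytope claim is literally the one proved inside Theorem~\ref{thm:k-HC}(i). I expect the main obstacle to be convincing oneself that the two-variable inequality (Proposition~\ref{prop:ribbon-indep} / Lemma~\ref{lem:matrix-Phi-ineq}) really does bootstrap to $m$ mutually independent coordinates. It does, because matrix conditional expectation is linear, so the tower property and the grouping $W=(Y_1,\dots,Y_{m-1})$ behave exactly as in the scalar case; one should just be careful that all the matrix-valued objects appearing in the induction ($\mt F$, $\mt G$, $\mt{G}'$ and their conditional expectations) have the same fixed dimension, so that the dimension caveat in the definition of $\fR_{\Phi}^M$ is harmless.
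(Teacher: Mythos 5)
Your proposal is correct and follows exactly the route the paper intends: the theorem is stated there without proof as an ``easy extension'' of Theorem~\ref{thm:k-ind}, namely the same vertex/convex-combination (polytope) argument with the scalar full-cube fact (Proposition~\ref{prop:phi-ribbon-properties}(ii)) replaced by its matrix analogue. Your inductive derivation of the multipartite inequality $\sum_{i}\HPhi(\E[\mt{G}\mid Y_i])\le\HPhi(\mt{G})$ for mutually independent $Y_1,\dots,Y_m$ from Proposition~\ref{prop:ribbon-indep} (equivalently Lemma~\ref{lem:matrix-Phi-ineq}\ref{matrix-Phi-ineq-1} plus the chain rule), together with the use of \eqref{matrix-Phi:cond-reduce} to pass from $X_S$ to $X^n$, correctly supplies the one step the paper leaves implicit.
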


\begin{theorem}
    Suppose $\Phi\in \mathscr{F}_M$. Let $H=(V,E)$ be the hypergraph specifies the independence structure of $X_1,\dots,X_n$. Define $$F\defeq\{\vec{0},\vec{e}_1,\dots,\vec{e}_n\}\cup \{\vec{e}_{S}\}_{S\in E}$$
    where $\vec{e}_i$ stands for the $i$-th standard basis and $\vec{e}_S=\sum_{i\in S}\vec{e}_i$. Let $\Conv(F)$ denote the convex hull of $F$. Then 
    $$\Conv(F)\subseteq \fR_{\Phi}^M(X_1;\dots;X_n).$$ 
\end{theorem}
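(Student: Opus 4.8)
The plan is to mirror the proof of Theorem~\ref{thm:general-hypergraph} almost verbatim, with the scalar $\Phi$-mutual-information estimates replaced by their matrix $\Phi$-entropy analogues. Here $\fR_{\Phi}^M(X_1;\dots;X_n)$ is understood as the set of $(\lambda_1,\dots,\lambda_n)\in[0,1]^n$ with $\sum_{i=1}^n\lambda_i\HPhi(\E[\mt F|X_i])\le\HPhi(\mt F)$ for every PSD-valued $\mt F=\mt F(X^n)$. Since for each fixed $\mt F$ this defining relation is linear in $(\lambda_1,\dots,\lambda_n)$ with nonnegative coefficients, a convex combination of points of $F=\{\vec{0},\vec{e}_1,\dots,\vec{e}_n\}\cup\{\vec{e}_S\}_{S\in E}$ that individually satisfy it will again satisfy it; and $\Conv(F)\subseteq[0,1]^n$ because $F\subseteq[0,1]^n$. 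So it suffices to check that each member of $F$ lies in $\fR_{\Phi}^M(X_1;\dots;X_n)$.

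Two of the three families are essentially free. The origin $\vec{0}$ works because the matrix $\Phi$-entropy is nonnegative. For $\vec{e}_i$ we must verify $\HPhi(\E[\mt F|X_i])\le\HPhi(\mt F)$; writing $\HPhi(\mt F)=\HPhi(\E[\mt F|X^n])$ and applying the ``conditioning reduces matrix $\Phi$-entropy'' inequality~\eqref{matrix-Phi:cond-reduce} with $(X,Y)$ there taken to be $(X_i,X_{\widehat{i}})$ gives exactly this.

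The only step with real content is $\vec{e}_S$ for $S\in E$: when $(X_i)_{i\in S}$ are mutually independent we need $\sum_{i\in S}\HPhi(\E[\mt F|X_i])\le\HPhi(\mt F)$. I would first replace $\mt F$ by $\mt G\defeq\E[\mt F|X_S]$: by~\eqref{matrix-Phi:cond-reduce} one has $\HPhi(\mt F)\ge\HPhi(\mt G)$, while the tower property gives $\E[\mt G|X_i]=\E[\mt F|X_i]$ for $i\in S$, so it is enough to show $\sum_{i\in S}\HPhi(\E[\mt G|X_i])\le\HPhi(\mt G)$ for a matrix function $\mt G$ of mutually independent variables $(X_i)_{i\in S}$ --- equivalently, that $(1,\dots,1)$ lies in the matrix $\Phi$-ribbon of mutually independent variables. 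This is the multipartite extension of Proposition~\ref{prop:ribbon-indep}, and it is the place where the argument actually uses the matrix machinery. I would prove it by induction on $|S|$: with $S=\{j\}\cup S'$ and $X_j\independent X_{S'}$, Proposition~\ref{prop:ribbon-indep} applied to the pair $(X_j,X_{S'})$ and the function $\mt G$ yields $\HPhi(\E[\mt G|X_j])+\HPhi(\E[\mt G|X_{S'}])\le\HPhi(\mt G)$, while the induction hypothesis applied to $\mt G'\defeq\E[\mt G|X_{S'}]$ (using the tower property again, $\E[\mt G'|X_i]=\E[\mt G|X_i]$ for $i\in S'$) gives $\sum_{i\in S'}\HPhi(\E[\mt G|X_i])\le\HPhi(\E[\mt G|X_{S'}])$; summing these two inequalities closes the induction, with base case $|S|\le 2$ being Proposition~\ref{prop:ribbon-indep} (or trivial). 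Taking the convex combination of the inequalities indexed by $\vec{0},\vec{e}_1,\dots,\vec{e}_n$ and the $\vec{e}_S$ then proves $\Conv(F)\subseteq\fR_{\Phi}^M(X_1;\dots;X_n)$.

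The hard part is thus localized entirely in the multipartite bootstrapping of Proposition~\ref{prop:ribbon-indep}; but this is routine, since the scalar proof of the analogous statement rests only on ingredients with verbatim matrix counterparts available here --- the chain rule~\eqref{matrix-Phi:chain-rule} for conditional matrix $\Phi$-entropy, the monotonicity~\eqref{matrix-Phi:cond-reduce}, and the independence/Markov inequalities of Lemma~\ref{lem:matrix-Phi-ineq}. Alternatively, the multipartite inequality can be extracted from the Tropp--Chen subadditivity of the matrix $\Phi$-entropy by the usual dualization: apply subadditivity to $\mt G$ and cancel the conditional terms using the chain rule. With that in hand, the theorem is a faithful matrix transcription of Theorem~\ref{thm:general-hypergraph}.
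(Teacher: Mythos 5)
Your proof is correct and is essentially the argument the paper intends: this theorem is stated without proof as the matrix transcription of Theorem \ref{thm:general-hypergraph}, and your reduction to the vertices of $F$, together with the inductive multipartite extension of Proposition \ref{prop:ribbon-indep} (using the chain rule \eqref{matrix-Phi:chain-rule}, the monotonicity \eqref{matrix-Phi:cond-reduce}, and the tower property to pass from $\mt F$ to $\mt G=\E[\mt F|X_S]$), supplies exactly the ingredient the paper leaves implicit, since Proposition \ref{prop:ribbon-indep} is only proved for pairs. One minor caveat: your parenthetical alternative of ``dualizing'' the Tropp--Chen subadditivity by cancelling conditional terms via the chain rule only yields $\sum_{i}\HPhi(\E[\mt G|X_{\widehat{i}}])\le (m-1)\HPhi(\mt G)$ when there are $m\ge 3$ mutually independent variables, which is weaker than the needed $\sum_i \HPhi(\E[\mt G|X_i])\le \HPhi(\mt G)$ (the two coincide only for $m=2$), so the induction on $|S|$ should be regarded as the actual proof rather than that shortcut.
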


\begin{theorem}
    Suppose $\Phi\in \mathscr{F}_M$. If there exists $S,T$ such that $S\independent T$, $X\independent Y|S$ and $X\independent Y|T$, then $(2/3,2/3)$ is in the matrix $\Phi$-ribbon. 
\end{theorem}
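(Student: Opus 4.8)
The plan is to transcribe the scalar $\Phi$-mutual-information Zhang--Yeung argument of the previous section into the matrix-function language, using Lemma~\ref{lem:matrix-Phi-ineq} wherever the scalar proof invoked the identities $\wIPhi(S;T)=0$, $\wIPhi(X;Y|S)=0$, $\wIPhi(X;Y|T)=0$. Fix an arbitrary PSD matrix-valued $\mt F\colon \cX\times\cY\to\PSD$ and let $S,T$ be random variables (guaranteed by the hypothesis) with $S\independent T$, $X\independent Y\mid S$ and $X\independent Y\mid T$; regard $\mt F$ as a matrix-valued function of $(X,Y,S,T)$ that happens to be constant in $(S,T)$, so that $\E[\mt F\mid XYST]=\mt F$. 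All matrix $\Phi$-entropies below are taken with respect to the joint law $p_{XYST}$, and I will use freely the chain rule~\eqref{matrix-Phi:chain-rule}, its conditional refinements (obtained by applying the unconditional chain rule inside each fibre $\{S=s\}$ and averaging, exactly as in~\cite{beigi2018phi}), and the ``conditioning reduces the matrix $\Phi$-entropy'' fact~\eqref{matrix-Phi:cond-reduce}; recall $\Phi\in\mathscr{F}_M$, which is what Lemma~\ref{lem:matrix-Phi-ineq} requires. Since $\mt F$ is arbitrary, it suffices to show $\tfrac{2}{3}\HPhi(\E[\mt F\mid X])+\tfrac{2}{3}\HPhi(\E[\mt F\mid Y])\le\HPhi(\mt F)$.

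First I would establish, from $S\independent T$, the inequality
\[
    \HPhi(\E[\mt F\mid S])+\HPhi(\E[\mt F\mid T])\ \le\ \HPhi(\mt F)\quad\text{(A)}.
\]
Apply Lemma~\ref{lem:matrix-Phi-ineq}(i) to $\mt G\defeq\E[\mt F\mid ST]$, a matrix function of the independent pair $(S,T)$: this gives $\HPhi(\mt G\mid S)\ge\HPhi(\E[\mt G\mid T])=\HPhi(\E[\mt F\mid T])$. Combining with the chain rule $\HPhi(\mt G)=\HPhi(\E[\mt F\mid S])+\HPhi(\mt G\mid S)$ and with $\HPhi(\mt G)=\HPhi(\E[\mt F\mid ST])\le\HPhi(\mt F)$ (the latter because $\E[\mt F\mid ST]$ is a further coarsening of $\mt F=\E[\mt F\mid XYST]$, so \eqref{matrix-Phi:cond-reduce} applies) yields (A).

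Next, from $X\independent Y\mid S$ -- equivalently the Markov chain $X\to S\to Y$ -- I would derive
\[
    \HPhi(\E[\mt F\mid X])+\HPhi(\E[\mt F\mid Y])\ \le\ \HPhi(\mt F)+\HPhi(\E[\mt F\mid S])\quad\text{(B)},
\]
and, by the same argument with $T$ in place of $S$, the analogous bound (C) with $\HPhi(\E[\mt F\mid T])$ on the right. For (B): Lemma~\ref{lem:matrix-Phi-ineq}(ii), applied with conditioning variable $S$ and with $\mt F$ viewed as a function of $(X,Y,S)$, gives $\HPhi(\mt F\mid XS)\ge\HPhi(\E[\mt F\mid YS]\mid S)$. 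Plugging this into the conditional chain rule $\HPhi(\mt F\mid S)=\HPhi(\E[\mt F\mid XS]\mid S)+\HPhi(\mt F\mid XS)$, and rewriting $\HPhi(\E[\mt F\mid XS]\mid S)=\HPhi(\E[\mt F\mid XS])-\HPhi(\E[\mt F\mid S])\ge\HPhi(\E[\mt F\mid X])-\HPhi(\E[\mt F\mid S])$ (chain rule followed by \eqref{matrix-Phi:cond-reduce}, and similarly with $Y$), one obtains $\HPhi(\mt F\mid S)\ge\HPhi(\E[\mt F\mid X])+\HPhi(\E[\mt F\mid Y])-2\HPhi(\E[\mt F\mid S])$; since $\HPhi(\mt F\mid S)=\HPhi(\mt F)-\HPhi(\E[\mt F\mid S])$, this rearranges to (B). Adding (A), (B), (C) cancels the term $\HPhi(\E[\mt F\mid S])+\HPhi(\E[\mt F\mid T])$ from both sides and leaves $2\bigl(\HPhi(\E[\mt F\mid X])+\HPhi(\E[\mt F\mid Y])\bigr)\le 3\HPhi(\mt F)$, which is exactly the claim; hence $(2/3,2/3)\in\fR_{\Phi}^M(X;Y)$.

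The proof introduces no new analytic inequality beyond Lemma~\ref{lem:matrix-Phi-ineq}; the only thing needing care is the bookkeeping with conditional matrix $\Phi$-entropies -- precisely, verifying that the conditional forms of the chain rule~\eqref{matrix-Phi:chain-rule} used above are valid (each follows by applying the unconditional identity within each fibre of the conditioning variable and averaging), and noting that Lemma~\ref{lem:matrix-Phi-ineq}(ii) applies verbatim to $\mt F$ regarded as a function of $(X,Y,S)$ even though it does not genuinely depend on $S$, since the only hypothesis needed there is the Markov chain $X\to S\to Y$, i.e.\ the assumption $X\independent Y\mid S$. Note that, unlike the scalar proof, no ``without loss of generality'' Markov reduction is required here, because in the matrix $\Phi$-ribbon formulation $\mt F$ is by definition already a function of $(X,Y)$ alone.
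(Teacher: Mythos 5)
Your proof is correct: the three inequalities (A), (B), (C) follow legitimately from Lemma \ref{lem:matrix-Phi-ineq}, the chain rule \eqref{matrix-Phi:chain-rule} (applied fibre-wise for the conditional versions), and \eqref{matrix-Phi:cond-reduce}, and their sum cancels the $S$- and $T$-terms exactly as you claim, giving $2\bigl(\HPhi(\E[\mt F|X])+\HPhi(\E[\mt F|Y])\bigr)\le 3\HPhi(\mt F)$. The paper states this theorem without proof, and your argument is precisely the intended one: it transcribes the scalar Zhang--Yeung derivation (inequalities \eqref{ineq:IPhi-JST}--\eqref{ineq:IPhi-JXYT}) into matrix $\Phi$-entropy language, with $\HPhi(\E[\mt F|\cdot])$ playing the role of $\IPhi(J;\cdot)$ and Lemma \ref{lem:matrix-Phi-ineq} replacing the $\wIPhi=0$ facts, including the correct observation that no Markov reduction on an auxiliary variable is needed since $\mt F$ is already a function of $(X,Y)$.
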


\bibliographystyle{IEEEtran}
\bibliography{refs}

\appendix

\section{Proof of Lemma \ref{lmm5ni}} \label{app:proof-lmm5ni}
Property (i) follows from 
\begin{align*}
    \IPhi(X;Y) &=\sum_{x,y} p(x)p(y)\Phi\left(\frac{p(x,y)}{p(x)p(y)}\right)-\Phi(1)\\
    &=\sum_{x} p(x)
    \Bigg(
    p(g(x))\Phi\left(\frac{1}{p(g(x))}\right)
    +(1-p(g(x))) \Phi(0)
    \Bigg)
    -\Phi(1)\\
    &=\IPhi(Y;Y).
\end{align*}
Property (ii) follows from $\IPhi(X;Y|X)=\IPhi(X;XY)-\IPhi(X;X)=0$. 

Property (iii) follows from
\begin{align*}
    \IPhi(X;XY|Z)&=\IPhi(X;XYZ)-\IPhi(X;Z)\\
    &=\IPhi(X;XZ)-\IPhi(X;Z)\\
    &=\IPhi(X;X|Z).
\end{align*}
Property (iv) follows from
\begin{align*}
    \IPhi(X;YY)&=\sum_{x,y,y'} p(x)p(y,y')\Phi\left(\frac{p(x,y,y')}{p(x)p(y,y')}\right)-\Phi(1)\\
    &=\sum_{x,y} p(x)p(y)\Phi\left(\frac{p(x,y)}{p(x)p(y)}\right)-\Phi(1)\\
    &=\IPhi(X;Y).
\end{align*}

Property (v) follows from
$$\IPhi(X;Y)\leq \IPhi(XY;Y)=\IPhi(Y;Y)$$
where we used the first part of this lemma.

Part (vi). For $u\in \cU$, let $f_u(x,y,z)=\frac{p(u,x,y,z)}{p(u)p(x,y,z)}$. Then, one can verify that
\begin{align*}
\IPhi(U;Y|XZ)&=\sum_{u} p(u)
\HPhi(\E[f_u|XYZ]|XZ)
\\
\IPhi(U;Y|Z)&=\sum_{u} p(u)
    \HPhi(\E[f_u|YZ]|Z)
\end{align*}
It suffices to show that for every $u$, we have
$$\HPhi(\E[f_u|XYZ]|XZ)\geq \HPhi(\E[f_u|YZ]|Z).$$
This follows from \cite[Lemma 7(b)]{beigi2018phi} for the function $g(x,y,z)=\E[f_u|X=x,Y=y,Z=z]$.

\section{Proof of Theorem \ref{thm:wIPhi-optimizer}} \label{app:proof-optimizer}

Consider
\begin{align*}
    \wIPhi(X;Y|Z)
    &= \max_{U} \IPhi(U;XZ)+\IPhi(U;YZ)-\IPhi(U;XYZ)-\IPhi(U;Z).
\end{align*}
Note that
\begin{align*}
    \IPhi(U;XZ)+\IPhi(U;YZ)-\IPhi(U;XYZ)-\IPhi(U;Z)
    &=\sum_{u} p(u) g(p_{XYZ|u})
\end{align*}
where 
\begin{align*}
    g(q_{XYZ})=\DPhi(q_{XZ}\|p_{XZ})
    +\DPhi(q_{YZ}\|p_{YZ}) 
    -\DPhi(q_{XYZ}\|p_{XYZ}) 
    -\DPhi(q_{Z}\|p_{Z}).
\end{align*}
Then $\wIPhi(X;Y|Z)$ is the upper concave envelope of $g(q_{XYZ})$ at $p_{XYZ}$. By Carath\'eodory's theorem, it suffices to consider the random variable $U$ with cardinality at most $|\cX||\cY||\cZ|$. We claim the following statements:
\begin{itemize}
    \item Fixing $q_Z$ and $q_{Y|XZ}$, $g(q_{XYZ})$ is convex in $q_{X|Z}$.
    \item Fixing $q_Z$ and $q_{X|YZ}$, $g(q_{XYZ})$ is convex in $q_{Y|Z}$.
\end{itemize}
To compute 
the upper concave envelope of 
$g(q_{XYZ})$, we need to maximize
\begin{align}\sum_{i}\omega_i g(q^{(i)}_{XYZ})\label{objf}\end{align}
subject to
$$\sum_{i}\omega_i q^{(i)}_{XYZ}=p_{XYZ}.$$
Take some weights $\omega_i$ and some distributions $q^{(i)}_{XYZ}$. 
The first bullet shows that if we write 
$q^{(i)}_{XYZ}$ itself as a convex combination of distributions where $X$ is a function of $Z$, and replace $q^{(i)}_{XYZ}$ with that convex combination, the objective function in \eqref{objf} will not decrease. Thus, to compute the upper concave envelope of $g(q_{XYZ})$, it suffices to take a convex combination of distributions where $X$ is a function of $Z$. Similarly, the second bullet shows that we can then assume that $Y$ is also a function of $Z$. Thus, it suffices to take a convex combination of distributions where  $(X,Y)$ is a function of $Z$ under $q_{XY|Z}$. This would complete the proof. 

It remains to prove the first bullet (proof of the second bullet is similar).

Fixing $q_Z$ and $q_{Y|XZ}$, the term $\DPhi(q_{Z}\|p_{Z})$is fixed; the term $\DPhi(q_{YZ}\|p_{YZ})$ is convex in $q_{X|Z}$ since $\Phi$-divergence is convex in its arguments. For every $z$ and fora fixed $q(y|x,z)$ and $q(z)$,  the mapping
\begin{align*}
    q(x|z) \mapsto 
    \Phi\left(\frac{q(x|z)q(z)}{p(x,z)}\right)
    -\Phi\left(\frac{q(x,y|z)q(z)}{p(x,y,z)}\right)
\end{align*}
is convex 
as the second order condition satisfies
\begin{align*}
    q(x|z)^2 \Phi''\left(\frac{q(x|z)q(z)}{p(x,z)}\right)
    -q(x,y|z)^2 \Phi''\left(\frac{q(x,y|z)q(z)}{p(x,y,z)}\right)
    \geq 0
\end{align*}
since $s^2\Phi''(s)$ is concave. This implies that
$\DPhi(q_{XZ}\|p_{XZ})
    -\DPhi(q_{XYZ}\|p_{XYZ})$ is convex in $q_{X|Z}$.

\section{Proof of Theorem \ref{thm:tilde-IPhi}} \label{app:proof-thm-IPhi}

(i) Taking the special choice of $U=X$ yields
\begin{align*}
    \IPhi(X;X|Z) +\IPhi(X;Y|Z) -\IPhi(X;XY|Z)
    &=\IPhi(X;Y|Z),
\end{align*}
where we used Lemma \ref{lmm5ni}. 
Thus $\wIPhi(X;Y|Z)\geq \IPhi(X;Y|Z)$. 

(ii) Suppose $\wIPhi(X;Y)=0$. By (i), we have
\begin{align*}
    \IPhi(X;Y)\leq \wIPhi(X;Y)=0.
\end{align*}
Then $\IPhi(X;Y)=0$. Since $\Phi$ is strictly convex at $x=1$, $X$ is independent of $Y$ (see \cite[Proposition 1]{sason2016f}).  

(iii) For any $U,X,Y$, we have
\begin{align*}
    \IPhi(U;X)+\IPhi(U;Y)-\IPhi(U;XY)&\leq\IPhi(U;X)\\
    &=\IPhi(U;X)+\IPhi(U;X)-\IPhi(U;XX),
\end{align*}
where we used Lemma \ref{lmm5ni}.
Thus, $\wIPhi(X;Y)\leq \wIPhi(X;X)$. Next,
\begin{align*}
\wIPhi(X;X)&=\max_U\IPhi(U;X)+\IPhi(U;X)-\IPhi(U;XX)
    \\&=\max_U\IPhi(U;X)
    \\&=\IPhi(X;X)
\end{align*}
where we used Lemma \ref{lmm5ni}.

Since $\wIPhi(X;X)\geq \IPhi(X;X)$ (part i of this lemma), we get $\wIPhi(X;X)=\IPhi(X;X)$.
\begin{align*}
    \wIPhi(X;XY)=\max_{U}\IPhi(U;XY)+\IPhi(U;X)-\IPhi(U;XXY) &=\max_{U}\IPhi(U;X)=\IPhi(X;X).
\end{align*}
Thus $\wIPhi(X;XY)=\wIPhi(X;X)=\IPhi(X;X)$.

Proof of (iv): Take independent $X$ and $Y$. By the chain rule, we have
\begin{align*}
    \wIPhi(X;Y) 
    &= \max_{U} \IPhi(U;X)+\IPhi(U;Y)-\IPhi(U;XY) \\
    &= \max_{U} \IPhi(U;Y)-\IPhi(U;Y|X). 
\end{align*}
By the part (vi) of Lemma \ref{lmm5ni},  $X\independent Y$ implies that $\IPhi(U;Y)-\IPhi(U;Y|X)\leq 0$ for every $U$. This completes the proof.

(v) Suppose $X\to Z\to Y$ is a Markov chain. By the chain rule, we have
\begin{align*}
    \wIPhi(X;Y|Z) 
    &= \max_{U} \IPhi(U;X|Z)+\IPhi(U;Y|Z)-\IPhi(U;XY|Z) \\
    &= \max_{U} \IPhi(U;Y|Z)-\IPhi(U;Y|XZ). 
\end{align*}
By the part (vi) of Lemma \ref{lmm5ni},  the Markov condition implies that $\IPhi(U;Y|Z)-\IPhi(U;Y|XZ)\leq 0$ for every $U$. This completes the proof.

\section{Proof of the properties of the matrix $\Phi$-ribbon} \label{app:proof-matrix-phi}

\begin{proof}[Proof of Lemma \ref{lem:matrix-Phi-ineq}]
    (i) By Definition \ref{def:class-FM} (ii), the mapping
    \begin{align*}
        \mt F(X) \mapsto \sum_{x}p(x)\Phi(\mt F(x)) -\Phi\left(\sum_{x}p(x) \mt F(x)\right)
    \end{align*}
    is convex. Then by Jensen's inequality, for any $q_Y$ and $\mt F(X,Y)$, we have
    \begin{align*}
        &\sum_{y}q(y)
        \Bigg(
        \sum_{x}p(x)\Phi(\mt F(x,y)) -\Phi\left(\sum_{x}p(x) \mt F(x,y)\right)
        \Bigg)\\
        \geq&
        \sum_{x}p(x)\Phi(\sum_y q(y) \mt F(x,y)) -\Phi\left(\sum_{x,y}p(x)q(y) \mt F(x,y)\right),
    \end{align*}
    which is equivalent to
    \begin{align*}
        \HPhi(\mt F|X)\geq \HPhi(\E[\mt F|Y]).
    \end{align*}

    (ii) By (i), fixing $Z=z$, we have the inequalities for the function $\mt G_{XY}^{(z)}(x, y) = \mt F(x,y,z)$. Then taking average over $z$, we obtain the result. 
\end{proof}

\begin{proof}[Proof of Proposition \ref{prop:ribbon-indep}]
    It suffices to prove
    \begin{align*}
        \HPhi(\mt F) \geq \HPhi(\E[\mt F|X])+\HPhi(\E[\mt F|Y]).
    \end{align*}
    By the chain rule \eqref{matrix-Phi:chain-rule}, it is equivalent to showing that 
    \begin{align*}
        \HPhi(\mt F|X)\geq \HPhi(\E[\mt F|Y]),
    \end{align*}
    which is the conclusion of Lemma \ref{lem:matrix-Phi-ineq} \ref{matrix-Phi-ineq-1} since $X$ is independent of $Y$. 
\end{proof}

\begin{proof}[Proof of Theorem \ref{thm:tensor-dpi}]
    By letting $\mt F$ to be an arbitrary mapping of $(X_1,Y_1)$ or $(X_2,Y_2)$, we have
    \begin{align*}
        \fR_{\Phi}^M(X_1X_2;Y_1Y_2) \subseteq \fR_{\Phi}^M(X_1;Y_1)\cap \fR_{\Phi}^M(X_2;Y_2).
    \end{align*}
    To prove the other direction, letting $(\lambda_1,\lambda_2)\in \fR_{\Phi}^M(X_1;Y_1)\cap \fR_{\Phi}^M(X_2;Y_2)$, we then need to show that 
    \begin{align} \label{eq:tensor-direction}
        \HPhi(\mt F)\geq \lambda_1 \HPhi(\E[\mt F|X_1X_2]) + \lambda_2 \HPhi(\E[\mt F|Y_1Y_2]).
    \end{align}
    Since $\E[\mt F|X_1 Y_1]$ is a positive semidefinite mapping of $(X_1,Y_1)$, we have
    \begin{align*}
         \HPhi(\E[\mt F|X_1 Y_1])\geq \lambda_1 \HPhi(\E[\mt F|X_1]) + \lambda_2 \HPhi(\E[\mt F|Y_1]).
    \end{align*}
    Moreover, by fixing $X_1=x$, $Y_1=y$, $\mt F$ is a mapping of $(X_2,Y_2)$. Then taking the expectation, we have
    \begin{align*}
        \HPhi(\mt F|X_1 Y_1)\geq \lambda_1 \HPhi(\E[\mt F|X_1 X_2 Y_1]|X_1 Y_1) +\lambda_2 \HPhi(\E[\mt F|X_1 Y_1 Y_2]|X_1 Y_1).
    \end{align*}
    Note that we have the Markov chains $X_2\rightarrow X_1\rightarrow Y_1$ and $X_1\rightarrow Y_1\rightarrow Y_2$, by Lemma \ref{lem:matrix-Phi-ineq} \ref{matrix-Phi-ineq-2}, we have
    \begin{align*}
        \HPhi(\E[\mt F|X_1 X_2 Y_1]|X_1 Y_1)\geq \HPhi(\E[\mt F|X_1 X_2]),
    \end{align*}
    and
    \begin{align*}
        \HPhi(\E[\mt F|X_1 Y_1 Y_2]|X_1 Y_1)\geq \HPhi(\E[\mt F|Y_1Y_2]),
    \end{align*}
    which proves \eqref{eq:tensor-direction}.

    (ii) Suppose $(\lambda_1,\lambda_2)\in \fR_{\Phi}^M(X_1,Y_1)$. Let $\mt F=\mt F(X_2,Y_2)$. We then have
    \begin{align} \label{dpi-matrix-1}
        \HPhi(\E[\mt F|X_1Y_1])\geq \lambda_1 \HPhi(\E[\mt F|X_1]) +\lambda_2 \HPhi(\E[\mt F|Y_1]).
    \end{align}
    By the assumption, conditioned on $(X_1,Y_1)$, $X_2$ is independent of $Y_2$. By Proposition \ref{prop:ribbon-indep}, we have
    \begin{align}\label{dpi-matrix-2}
        \HPhi(\mt F|X_1Y_1)&\geq \HPhi(\E[\mt F|X_1X_2Y_1]|X_1Y_1) +\HPhi(\E[\mt F|X_1Y_1Y_2]|X_1Y_1)\nonumber\\
        &\geq \lambda_1 \HPhi(\E[\mt F|X_1X_2Y_1]|X_1Y_1) +\lambda_2 \HPhi(\E[\mt F|X_1Y_1Y_2]|X_1Y_1).
    \end{align}
    Summing up \eqref{dpi-matrix-1} and \eqref{dpi-matrix-2}, we have
    \begin{align*}
        \HPhi(\mt F) \geq \lambda_1 \left(\HPhi(\E[\mt F|X_1])+\HPhi(\E[\mt F|X_1X_2Y_1]|X_1Y_1)\right)
        +\lambda_2 \left(\HPhi(\E[\mt F|Y_1])+\HPhi(\E[\mt F|X_1Y_1Y_2]|X_1Y_1)\right).
    \end{align*}
    It remains to prove
    \begin{align*}
        \HPhi(\E[\mt F|X_1])+\HPhi(\E[\mt F|X_1X_2Y_1]|X_1Y_1)\geq \HPhi(\E[\mt F|X_2]).
    \end{align*}
    Since we have the Markov chain $X_2\rightarrow X_1\rightarrow Y_1\rightarrow Y_2$, by Lemma \ref{lem:matrix-Phi-ineq} \ref{matrix-Phi-ineq-2}, we have
    \begin{align*}
        \HPhi(\E[\mt F|X_1X_2Y_1]|X_1Y_1)\geq \HPhi(\E[\mt F|X_1X_2]).
    \end{align*}
    Since conditioning reduces the matrix $\Phi$-entropy (see \eqref{matrix-Phi:cond-reduce}), we have
    \begin{align*}
        \HPhi(\E[\mt F|X_1X_2Y_1]|X_1Y_1) \geq\HPhi(\E[\mt F|X_1X_2]) \geq \HPhi(\E[\mt F|X_2]).
    \end{align*}
    Similarly, we have
    \begin{align*}
        \HPhi(\E[\mt F|X_1Y_1Y_2]|X_1Y_1)\geq \HPhi(\E[\mt F|Y_2]).
    \end{align*}
    Thus, $(\lambda_1,\lambda_2)\in \fR_{\Phi}^M(X_2;Y_2)$, which finishes the proof of data processing property for the matrix $\Phi$-ribbon. 
\end{proof}

\section{Proof of Proposition \ref{prop:M-Phi-SDPI}} \label{app:proof-SDPI}

It suffices to show that 
\begin{align*}
    \lambda^2 \HPhi(\mt F) \geq \HPhi(\E[\mt F|Y]).
\end{align*}
Suppose $\mt F$ takes values from $\mt S_0$ and $\mt S_1$. Let $\mt M=\E[\mt F]$. Then we write $\mt S_0=\mt M-\mt Z$ and $\mt S_1=\mt M+\mt Z$. Thus we will show
\begin{align*}
    \lambda^2 \tr(\Phi(\mt M+\mt Z)+\Phi(\mt M-\mt Z)-2\Phi(\mt M))\geq \tr(\Phi(\mt M+\lambda \mt Z)+\Phi(\mt M-\lambda \mt Z)-2\Phi(\mt M)).
\end{align*}
Let 
\begin{align*}
    \psi(t)= \tr(\Phi(\mt M+\sqrt{t}\mt Z)+\Phi(\mt M-\sqrt{t}\mt Z)-2\Phi(\mt M)).
\end{align*}
It is equivalent to showing that
\begin{align*}
    \lambda^2 \psi(1)\geq \psi(\lambda^2).
\end{align*}
Since $\psi(0)=0$, it suffices to show that $\psi(t)$ is convex in $t$. By the chain rule of Frech\'et derivative, we compute
\begin{align*}
    \psi'(t)=&\tr(\D\Phi(\mt M+\sqrt{t}\mt Z)(\frac{1}{2\sqrt{t}}\mt Z)+\D\Phi(\mt M-\sqrt{t}\mt Z)(-\frac{1}{2\sqrt{t}}\mt Z)
    ). 
\end{align*}
And then
\begin{align*}
    \psi''(t)=&\tr(\D^2\Phi(\mt M+\sqrt{t}\mt Z)(\frac{1}{2\sqrt{t}}\mt Z,\frac{1}{2\sqrt{t}}\mt Z)+\D\Phi(M+\sqrt{t}Z)(-\frac{1}{4t^{3/2}}Z)\\
    &+\D^2\Phi(M-\sqrt{t}Z)(-\frac{1}{2\sqrt{t}}\mt Z,-\frac{1}{2\sqrt{t}}\mt Z)+\D\Phi(M-\sqrt{t}Z)(\frac{1}{4t^{3/2}}Z)).
\end{align*}
Let $s=\sqrt{t}$. It is equivalent to show that
\begin{align*}
    \tr(\D^2\Phi(\mt M+s\mt Z)(\mt Z,\mt Z)+\D^2\Phi(\mt M-s\mt Z)(-\mt Z,-\mt Z))\geq
    \frac{1}{s}\tr(\D\Phi(\mt M+s\mt Z)(\mt Z)-\D\Phi(\mt M-s\mt Z)(\mt Z)).
\end{align*}
Let 
\begin{align*}
    \xi(s)=\tr(\D\Phi(\mt M+s\mt Z)(\mt Z)-\D\Phi(\mt M-s\mt Z)(\mt Z)).
\end{align*}
Thus it suffices to show that $\xi(s)$ is convex for $s\in [0,1]$. 
\begin{align*}
    \xi'(s)=\tr(\D^2\Phi(\mt M+s\mt Z)(\mt Z,\mt Z)+\D^2\Phi(\mt M-s\mt Z)(-\mt Z,-\mt Z)).
\end{align*}
\begin{align*}
    \xi''(s)=\tr(D^3\Phi(\mt M+s\mt Z)(\mt Z,\mt Z,\mt Z)-D^3\Phi(\mt M-s\mt Z)(\mt Z,\mt Z,\mt Z)).
\end{align*}
First we show that $\xi'(s)$ is convex in $s\in [0,1]$, i.e., 
\begin{align*}
    &\lambda \left(\tr(\D^2\Phi(\mt M+s_1\mt Z)(\mt Z,\mt Z)+\D^2\Phi(\mt M-s_1\mt Z)(-\mt Z,-\mt Z))\right)\\
    +&(1-\lambda) \left(\tr(\D^2\Phi(\mt M+s_2\mt Z)(\mt Z,\mt Z)+\D^2\Phi(\mt M-s_2\mt Z)(-\mt Z,-\mt Z))\right)\\
    \geq & \tr(\D^2\Phi(\mt M+(\lambda s_1 +(1-\lambda)s_2)\mt Z)(\mt Z,\mt Z)+\D^2\Phi(\mt M-(\lambda s_1 +(1-\lambda)s_2)\mt Z)(-\mt Z,-\mt Z))
\end{align*}
for $0\leq \lambda \leq 1$ and $s_1,s_2\in [0,1]$. But this is implied by the convexity of 
\begin{align*}
    (A,B) \mapsto \tr(\D^2\Phi(A)(B,B)).
\end{align*}
Then, $\xi''(s)$ is increasing in $s$ and thus
\begin{align*}
    \xi''(s)\geq \xi''(0)=0
\end{align*}
for $s\in [0,1]$. Thus, $\xi(s)$ is convex. It concludes the proof.

\end{document}